\documentclass[journal]{IEEEtran}

\usepackage{graphicx}      

\usepackage{amsthm}
\usepackage{amsmath}
\usepackage{amsbsy}
\usepackage{amssymb}
\usepackage{bm}[nopbm]
\usepackage{multicol}
\usepackage{fancyhdr}
\fancypagestyle{firstpage}{
	\fancyhf{}
	\fancyhead[C]{This work has been submitted to the IEEE for possible publication. Copyright may be transferred without notice, after which this version may no longer be accessible.}
}
\pagestyle{plain}


\usepackage{color}
\usepackage{caption}
\usepackage{subcaption}


\usepackage{pgfplots}
\pgfplotsset{compat=newest}
\usepackage{tikz}
\usepackage{tikzscale}
\usepackage{placeins}
\usepackage{dblfloatfix}
\usetikzlibrary{external}
\tikzexternalize 

\usepackage{array}
\usepackage{booktabs}
\usepackage{multirow}

\newcommand{\vek}[1]{{\mathbf #1}}

\newcommand{\m}[1]{\mathbf{#1}}
\newcommand{\sv}[1]{\boldsymbol{#1}}

\newcommand{\play}[1]{^{(#1)}}

\newcommand{\Tplay}[1]{^{(#1)T}}

\newcommand{\normTwo}[1]{\left\lVert#1\right\rVert_2}

\newtheorem{definition}{Definition}

\newtheorem{lemm}{Lemma}
\theoremstyle{remark}
\newtheorem*{remark}{Remark}
\hyphenation{op-tical net-works semi-conduc-tor}
\usepackage{multicol}
\usepackage{fancyhdr}
\fancypagestyle{firstpage}{
	\fancyhf{}
	\fancyhead[C]{This work has been submitted to the IEEE for possible publication. Copyright may be transferred without notice, after which this version may no longer be accessible.}
}

\begin{document}
	\pagenumbering{gobble} 
	\pagestyle{empty} 
	%
	\title{Limited Information Shared Control:\\ A Potential Game Approach}
	
	\author{Balint Varga, Jairo Inga and S\"oren Hohmann
		\thanks{All authors are with the Institute of Control Systems (IRS) at the Karlsruhe Institute of Technology (KIT), 76131 Karlsruhe, Germany
			{\tt\small \{balint.varga, soeren.hohmann\}@kit.edu}}
	}
	
	\maketitle
	\thispagestyle{firstpage}
	\begin{abstract}
	This paper presents a systematic method for the design of a limited information shared control (LISC). LISC is used in applications where not all system states or reference trajectories are measurable by the automation. Typical examples are partially human-controlled systems, in which some subsystems are fully controlled by automation while others are controlled by a human. The proposed systematic design method uses a novel class of games to model human-machine interaction: the near potential differential games (NPDG). We provide a necessary and sufficient condition for the existence of an NPDG and derive an algorithm for finding a NPDG that completely describes a given differential game. The proposed design method is applied to the control of a large vehicle-manipulator system, in which the manipulator is controlled by a human operator and the vehicle is fully automated. The suitability of the NPDG to model differential games is verified in simulations, leading to a faster and more accurate controller design compared to manual tuning. Furthermore, the overall design process is validated in a study with sixteen test subjects, indicating the applicability of the proposed concept in real applications.
	\end{abstract}
	
	\begin{IEEEkeywords}
		Differential Games, Potential Games, Cooperative Shared Control, Limited Information, Vehicle-Manipulator
	\end{IEEEkeywords}
	

	\vspace*{-1mm}
	\section{Introduction}
	\IEEEPARstart{I}n the last years, cooperative shared control applications have shown that they are more efficient than a manually controlled or a fully automated system in many applications~\cite{2018_TopologySharedControl_abbink}. The design of adequate automatic controllers for human-machine shared control systems requires the consideration and understanding of the control action of the human, which increases the acceptance and the performance of the control system \cite{2020_Shared_control_Eraslan}. In literature, there are several problem-specific shared control concepts for applications like working~machines~\cite{2010_BlendedSharedControl_enes}, sea excavator~\cite{2013_HapticSupportBimanual_kuiper}, in the development of cooperative assistive wheelchair \cite{2019_SharedControlSolution_devigne} or automated and autonomous vehicles \cite{2010_NeuromuscularAnalysisGuideline_abbink, 2018_AdaptiveGameTheoreticDecision_tian, 2019_HapticSharedSteering_zwaan, 2020_ReviewSharedControl_marcano}. 
	However, these concepts are problem-specific and consequently they are not easily generalizable.
	
	Therefore, a great effort has been made to develop systematic, model-based design methods for cooperative shared controls e.g.~\cite{2011_OptimalVehicleStability_tamaddoni, 2014_NecessarySufficientConditions_flad, 2015_GameTheoreticModelingSteering_na} or \cite{2019_ModellingOvertakingStrategy_farah}, which ensure the general transferability. They use the theory of differential games to model the human-machine interaction. These model-based approaches are based on the optimality principle of the human control actions \cite{2002_OptimalFeedbackControl_todorov} and the thesis that the haptic interactions can be modelled as a differential game in its Nash Equilibrium (NE), see~\cite{2009_NashEquilibriaMultiAgent_braun}, which has a widespread experimental evidence. Therefore, differential games are useful tools for the design of a shared control.
		
	These methods in the literature presume full information about the controlled system, therefore they are addressed as \textit{full information shared control} (FISC) in this paper. Such full information is however usually not given: First, some system states or reference trajectories are controlled by the human and may not be measurable for the automation. This can be caused by the sparing of sensors due to cost reduction or the technical impossibility of a measurement (e.g. the lack of localization in deep ocean applications~\cite{2014_AUVNavigationLocalization_paull} or working machines in unstructured working environment~\cite{2011_OpenloopControlExperiments_moralesa}). Second, the goals of the human are neither available nor exactly predictable in real world applications. To solve the challenge of the non-measurable references, a \textit{limited information shared control} (LISC) is proposed in \cite{2019_ControlLargeVehicleManipulators_varga} and \cite{2020_LimitedInformationCooperativeShared_varga}. However, in these earlier works present only heuristic parameter design of the LISC, but no systematic approach. 
	
	Therefore, in this paper, this research gap is addressed by introducing a systematic design of the LISC, which makes LISC more generally applicable. The idea is the usage of a FISC as a baseline for the design procedure of the LISC. This approach is justified by the following reasons: The FISC can be applied in a test-area or in a simulation environment, where all the system states are measurable to identify and design the LISC with a human operator. Then the LISC is suitable for real world applications, where full information of the system is not given. These design steps are illustrated in Fig. \ref{fig:design_steps}.
	
	For this systematic design, cooperative setups are modelled for the first time as \textit{potential games} (PGs). A PG provides a more compact and simpler representation, in which the original game is substituted by a single-player optimal control problem, see~\cite{1996_PotentialGames_monderer}. The optimal input of the single player yields the NE of the game. In our work, the so-called near PGs, introduced for static games in~\cite{2013_NearPotentialGamesGeometry_candogan}, are extended to differential games for the first time. To the best of the authors' knowledge, no research work deals with the theory or the application of the \textit{near potential differential games} (NPDGs). The contributions of this paper are as follows: 
	1) Introduction of the NPDG as an extension of the static near PGs, 2) Modelling human-machine interaction based on NPDG providing a simpler model than an N-player model, 3) Systematic design of a LISC replacing the manual tuning and 4) An experimental case study involving human-machine shared control of a large vehicle-manipulator (VM) in order to compare FISC to LISC.  
	
	In the following, Section~\ref{sec:state-of-art} presents the preliminaries of modelling human-machine interactions by means of game theory and the concept of the LISC. 
	\begin{figure}[t!]
		\centering
		
		\includegraphics[width=0.9\linewidth]{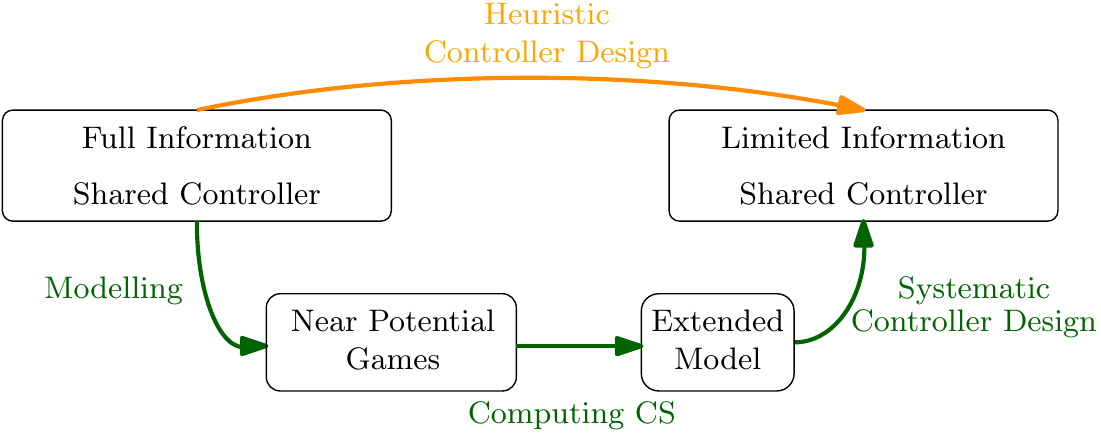}
		
		\caption{Illustration of the design steps.}
		\label{fig:design_steps}
	\end{figure}
	In Section~\ref{sec:Near_pot_games}, the concept of the NPDG is presented. A systematic identification of the CS through NPDG and a systematic design of LISC are given in Section~\ref{sec:sys_derivation}. In Section~\ref{sec:verif_sim}, the concept of the NPDGs is verified in simulations of a large VM. In Section~\ref{sec:validation}, the LISC is validated on a demonstrator with sixteen test subjects and a comparison between FISC and LISC is provided. Finally, Section~\ref{sec:conclu} concludes the contribution. Due to the considerable number of abbreviations, Table \ref{table:nomenclatures} provides a summary of them.
	
	\section{Preliminaries} \label{sec:state-of-art}
	In this section, the state-of-the-art of shared control applications using differential games and the core idea of the LISC are presented. Furthermore, the state-of-the-art of the exact PDGs and the solution steps of the proposed design are given.
	\begin{table}[b!] 
		\small
		\centering
		\caption{Nomenclature of the Abbreviations}
		\begin{tabular}{l|c}
			\multicolumn{1}{c|}{Terminus} & \multicolumn{1}{c}{Acronym} \\
			\hline
			\hline
			Nash Equilibrium & NE \\
			Limited Information shared control & LISC \\	
			Full Information shared control & FISC \\
			Potential Game & PG \\
			Potential Differential Game & PDG \\
			Near Potential Differential Game & NPDG \\
			Linear-Quadratic & LQ \\
			Cooperation State & CS \\
			Vehicle Manipulator & VM \\
			Two One-Sided t-Test & TOST \\
			\hline
		\end{tabular} 
		\label{table:nomenclatures}
	\end{table}
	\subsection{Modelling Shared Controls by Differential Games}
	In shared control applications, the actions of the automation and the human are linked: They combine their efforts reaching a common goal \cite{2018_TopologySharedControl_abbink}. 
	A survey of the design of shared controls is given e.g.~in~\cite{2017_SharedCooperativeControl_flemisch}. 
	
	For a systematic design of a shared control, an important aspect is the modelling of the human. A hypothesis to model human motion is based on the theory of optimal control: The human movements are the results of the dynamic optimization of an objective function $J\play{\mathrm{h}}$, see e.g~\cite{2002_OptimalFeedbackControl_todorov}. 
	A quadratic cost function with an underlying linear system dynamic is common in control theory for modelling human actions (see~\cite{2015_SolutionsInverseLQR_priess, 2019_InverseDiscountedbasedLQR_el-hussieny})
	\begin{align} \label{eq:hum_quad_cost}
		&J\play{\mathrm{h}} = \frac{1}{2} \int_{0}^{T_\mathrm{end}} \sv{x}^T \vek{Q}\play{\mathrm{h}} \sv{x} + {\sv{u}\play{\mathrm{h}}}^T \vek{R}\play{\mathrm{h}} \sv{u}\play{\mathrm{h}}\text{ d} t, \\
		& \;\text{s. t. } \; \; \; 	\dot{\sv{x}}(t) = \vek{A}\sv{x}(t) + \sum_{i \in \mathcal{P} } \vek{B}\play{i}\sv{u}\play{i}(t), \label{eq:lin_system}
	\end{align}
	where {${\vek{Q}\play{\mathrm{h}} \geq 0} \in \mathbb{R}^{n \times n}$} and ${\vek{R}\play{\mathrm{h}} > 0 \in \mathbb{R}^{p_\mathrm{h} \times p_\mathrm{h}}}$, with $J\play{\mathrm{h}}: \mathbb{R}^n \times \mathbb{R}^{p_\mathrm{h}}  \rightarrow \mathbb{R}$. The dynamic system in is $\sv{f}:\mathbb{R}^+ \times \mathbb{R}^n \times \mathbb{R}^{p_\mathrm{h}} \rightarrow \mathbb{R}^n$ and where $\sv{x} \in \mathbb{R}^n$ and $\sv{u}\play{\mathrm{h}} \in \mathbb{R}^{p_\mathrm{h}}$ are the system states and the inputs of the human, respectively.
	In~\cite{2009_NashEquilibriaMultiAgent_braun}, it is shown that haptic human interactions converges to a NE, therefore differential games are useful tools for the design of a shared control.
	With (\ref{eq:hum_quad_cost}) and (\ref{eq:lin_system}), the systematic design of a shared control is possible by means of game theory, see e.g. \cite{2014_NecessarySufficientConditions_flad, 2015_GameTheoreticModelingSteering_na}. In~\cite{2014_NecessarySufficientConditions_flad}, a global objective function $J\play{g}$ assumed to be given. This global objective is specified by the system designers to adapt the behaviour of the overall control loop to higher-level requirements. In that work, a quadratic cost function of the automation is suggested 
	\begin{equation} \label{eq:cost_fucntion}
		J\play{a}\!=\!\frac{1}{2}\int_{0}^{T_\mathrm{end}} \!\!\! \sv{x}(t)^T  \vek{Q}\play{a}\sv{x}(t)  \!+\!   \sum_{j \in \mathcal{P}} {\sv{u}\play{j}(t)}^T \vek{R}\play{aj}\sv{u}\play{j}(t)  \text{ d}t,
	\end{equation}
	with $J\play{a}: \mathbb{R}^n \times \mathbb{R}^{p_\mathrm{a}} \times \mathbb{R}^{p_\mathrm{h}} \rightarrow \mathbb{R}$, where $j \in \mathcal{P}=\{\mathrm{a},\;\mathrm{h}\}$, ${{\vek{Q}\play{a} \geq 0} \in \mathbb{R}^{n \times n}}$ and ${\vek{R}\play{\mathrm{a}j} > 0 \in \mathbb{R}^{p_\mathrm{a} \times p_\mathrm{j}}}$. 
	With (\ref{eq:hum_quad_cost}), (\ref{eq:lin_system}) and (\ref{eq:cost_fucntion}), a differential game $\Gamma_\text{d}$ is specified, with the corresponding player hamiltonian functions
	\begin{align} \label{eq:ham_SOA}
			H\play{i} =& \frac{1}{2}\sv{x}(t)^T  \vek{Q}\play{a}\sv{x}(t) \\ \nonumber
			&+\frac{1}{2}   \sum\limits_{j \in \mathcal{P}} {\sv{u}\play{j}(t)}^T \vek{R}\play{ij}\sv{u}\play{j}(t)	+\sv{\lambda}\Tplay{i}\sv{f}(t).
	\end{align}
	From (\ref{eq:ham_SOA}), the NE with a linear feedback control law of the players 
	$ 
	\sv{u}\play{i} = -\vek{K}\play{i}\sv{x}, 
	$ 
	can be computed. The necessary and sufficient condition for a NE is the existence of a solution $\vek{P}^{(i)}$ of the coupled algebraic Riccati equation \cite[p.~295]{2005_LQDynamicOptimization_engwerda}
	\begin{align} \label{eq:sol_i_Ric} \nonumber
		\vek{0} &= \vek{A}_c^T \vek{P}\play{i} +  \vek{P}\play{i} \vek{A}_c + \vek{Q}\play{i} \\ 
		&- \sum_{j \in \mathcal{P}} \vek{P}\play{i}\vek{B}\play{i}{\vek{R}\play{ii}}^{-1}\vek{R}\play{ij}{\vek{R}\play{ii}}^{-1} {\vek{B}\play{i}}^{T} \vek{P}\play{i}
	\end{align}
	where ${\vek{A}_c = \vek{A} - \sum_{i \in \mathcal{P} } \vek{B}\play{i}\vek{K}\play{i}}$ is the matrix of the closed loop system dynamics. The solution of~(\ref{eq:sol_i_Ric}) provides the feedback control gains ${	\vek{K}\play{i} = {\vek{R}\play{ii}}^{-1}\vek{P}\play{i}\vek{B}\play{i}.}$
	
	\subsection{Concept of the Limited Information Shared Control} \label{sec:concept_cooperation_sattes}
	In this subsection, a brief overview of the LISC-concept with linear system model is given, which is introduced by the authors in \cite{2020_LimitedInformationCooperativeShared_varga}. It is assumed that the system is modelled in the so-called Fr\'enet frame, which is a common approach for mobile robots \cite[Ch.~49.2]{2016_SpringerHandbookRobotics_brunosiciliano}. This means that the system states are given relative to the reference path and therefore the goal state is always $\sv{x}_{\mathrm{goal}} = \sv{0}$.  
	In the following, we consider a linear system dynamics in Fr\'enet frame, which can split into automation-controlled (measurable) $\sv{x}_m \in \mathbb{R}^{n-k}$ and human-controlled (non-measurable) states $\sv{x}_{nm} \in \mathbb{R}^k$. It is assumed that the human-controlled system state has no impact on the automation-controlled state states. Thus \mbox{(\ref{eq:lin_system})} is transformed with these separated states to the system model
	\begin{align} \label{eq:linearStateSpaceCoupMotion} \nonumber
		\begin{bmatrix}
			\dot{\sv{x}}_m(t) \\
			\dot{\sv{x}}_{nm}(t)
		\end{bmatrix}
		= 
		&\begin{bmatrix}
			\mathbf{A}^m_{m} & \mathbf{0} \\
			\mathbf{A}^m_{nm} & \mathbf{A}^{nm}_{nm} 
		\end{bmatrix}
		\begin{bmatrix}
			\sv{x}_m(t) \\
			\sv{x}_{nm}(t)
		\end{bmatrix} \\
		+ 
		& \,
		\mathbf{B}\play{\mathrm{a}} 
		\sv{u}\play{\mathrm{a}}(t) 
		+ 
		\mathbf{B}\play{\mathrm{h}} 
		\sv{u}\play{\mathrm{h}}(t). 
	\end{align}
	The automation controls only the measurable system part. The human operator controls the non-measurable system part, and therefore these human-controlled states are not measured.
	On the other hand, to support the operator, the non-measurable system part, which is influenced by the human, needs to be considered in controller design. 
	The proposed LISC solves this problem by modifying the control model for the design procedure. The idea is the introduction of an artificial state, the so-called cooperation state (CS), for reconstructing $\sv{x}_{nm}$ using the inputs of the human $\sv{u}\play{\mathrm{h}}(t)$ and the automation $\sv{u}\play{\mathrm{a}}(t)$. The human's input is function of the non-measurable system part. The CS provides an inversion of this function with the additional consideration that $\sv{u}\play{a}$ has an impact on $\sv{x}_{nm}$.
	\begin{definition}[Cooperation state \cite{2020_ValidationCooperativeSharedControl_vargab}] \label{def:coop}
		Consider a system with known system dynamics and with a measurable and with a non-measurable system parts~(\ref{eq:linearStateSpaceCoupMotion}). The cooperation state is defined as 
		\begin{equation} \label{eq:linCooperation_state}
			\sv{x}_\kappa(t) = {\Xi}\play{\mathrm{a}} \sv{u}\play{\mathrm{a}}(t) + {\Xi}\play{\mathrm{h}} \sv{u}\play{\mathrm{h}}(t)  \, \in \mathbb{R}^{k},
		\end{equation}
		where the matrices $\bm{\Xi}\play{\mathrm{a}} \in \mathbb{R}^{k \times p_a}$ and $\bm{\Xi}\play{\mathrm{h}} \in \mathbb{R}^{k \times p_h}$ are design parameters. 
	\end{definition}
	With~(\ref{eq:linCooperation_state}), an extended state vector is introduced ${\sv{x}_e(t) = [\sv{x}_{m}   \;
		\sv{u}\play{\mathrm{a}} \;
		\sv{x}_{\kappa} ]^T}, \, \in \mathbb{R}^{n+p_h}$ leading to an extended system dynamics
	\begin{equation} \label{eq:extended_system_eqaution}
		\!\resizebox{.92\hsize}{!}{$
			\begin{bmatrix}  
				\dot{\sv{x}}_{m}  \\
				\dot{\sv{u}}\play{\mathrm{a}} \\
				\dot{\sv{x}}_{\kappa} 
			\end{bmatrix}
			\! = \!
			\setlength{\arraycolsep}{1pt}
			\renewcommand{\arraystretch}{0.99}
			\begin{bmatrix}
				\mathbf{A}_{m} &\mathbf{B}\play{\mathrm{a}} & \sv{0} \\
				\sv{0} & \sv{0}& \sv{0}\\
				\sv{0} & \sv{0}& \sv{0}
			\end{bmatrix} \!
			\begin{bmatrix}  
				\sv{x}_{m}  \\
				\sv{u}\play{\mathrm{a}} \\
				\sv{x}_{\kappa} 
			\end{bmatrix}
			\! + \!
			\begin{bmatrix}  
				\sv{0} \\
				\sv{1}  \\
				\mathbf{\Xi}\play{\mathrm{a}}
			\end{bmatrix} \!
			\dot{\sv{u}}\play{\mathrm{a}} 
			\! + \!
			\begin{bmatrix}  
				\sv{0} \\
				\sv{0} \\
				\mathbf{\Xi}\play{\mathrm{h}}
			\end{bmatrix} \!
			\dot{\sv{u}}\play{\mathrm{h}}\!,$} \! \! \!
	\end{equation}
	where the derivate of the original system input $\dot{\sv{u}}\play{a}$ is taken into account only for the design procedure of the LISC. The structure of (\ref{eq:extended_system_eqaution}) shows that $\sv{x}_\kappa$ can effect $\sv{x}_m$ and $\sv{u}\play{a}$, if feedback controller is designed. 
	This modified model (\ref{eq:extended_system_eqaution}) and the cost function
	\begin{equation} \label{eq:extended_cost_function}
		J\play{\mathrm{a}}_{\text{LI}} \!= \!\int_{0}^{T_\mathrm{end}} \! \! {\sv{x}_e(t)}^T \vek{Q}_{\text{LI}}\play{\mathrm{a}} \sv{x}_e(t) + \dot{\sv{u}}\Tplay{a}\!(t) \vek{R}_{\text{LI}}\play{\mathrm{a}} \dot{\sv{u}}\play{\mathrm{a}}(t) \, \text{d} t,
	\end{equation}
	is used to formulate an LQR control problem, which excludes the non-measurable states~$\sv{x}_{nm}$ and enables the systematic design of a cooperative shared control. By solving (\ref{eq:extended_cost_function}), subject to (\ref{eq:extended_system_eqaution}), a linear control law is obtained
	\begin{equation} \label{eq:newMethodLinController}
		\dot{\sv{u}}\play{\mathrm{a}}_\mathrm{LI}(t) = - \vek{K}\play{\mathrm{a}}_{\text{LI}} \cdot \sv{x}_e(t),
	\end{equation}
	from which the the original system input is computed by
	\begin{equation}  \label{eq:newMethodLinControllerIntegral}
		\sv{u}\play{\mathrm{a}}_\mathrm{LI}(t) = \int_{0}^{t} \dot{\sv{u}}\play{\mathrm{a}}_\mathrm{LI}(\tau)  \; \text{d} \tau.
	\end{equation} 
	In (\ref{eq:extended_system_eqaution}) the initial value of original system input $\sv{u}\play{a}$ is assumed $\sv{u}\play{a}(0) = \sv{0}$ meaning that the original system input signal is zero at the beginning. This is a plausible assumption, since the controller can be initialized to zero without loss of generality.
	
	\subsection{Exact LQ Potential Differential Games}
	The following preliminaries are based on \cite{2016_SurveyStaticDynamic_gonzalez-sanchez}. PGs were first introduced in \cite{1996_PotentialGames_monderer}. The benefit of the PGs that they include all information about the original game, despite the simple description of the game.  
	\begin{definition}[Exact LQ PDG]
		Consider an LQ differential game $\Gamma_d$ with system dynamics (\ref{eq:lin_system}), 
		quadratic cost functions~(\ref{eq:cost_fucntion}) and Hamiltonian functions (\ref{eq:ham_SOA}). Consider further an LQ optimal control problem with (\ref{eq:lin_system}) and the cost function
		\begin{equation} \label{eq:pot_cost_function}
			J\play{p}=\frac{1}{2}\int_{0}^{T_\mathrm{end}} \sv{x}(t)^T \vek{Q}\play{p}\sv{x}(t) +  \sv{u}(t)^T\vek{R}\play{p}\sv{u}(t) \text{ d}t,
		\end{equation} 
		as well as the corresponding Hamilton function 
		\begin{equation} \label{eq:potential_hamil_quad_def}
			H\play{p} = \frac{1}{2}{\sv{x}(t)}^T \vek{Q}\play{p}\sv{x}(t) +  \frac{1}{2}{\sv{u}(t)}^T\vek{R}\play{p}\sv{u}(t) + \sv{\lambda}\Tplay{p}\sv{f}(t),
		\end{equation}
		with $\sv{u} = \left[\sv{u}\play{a}, \, \sv{u}\play{h}\right]$, where the matrices $\vek{Q}\play{p}$ and $\vek{R}\play{p}$ are positive semi-definite and positive definite, respectively.
		If the condition 
		\begin{equation} \label{eq:def_e_pot}
			\frac{\partial H\play{p}}{\partial \sv{u}\play{i}(t)} = \frac{\partial H\play{i}}{\partial \sv{u}\play{i}(t)}
		\end{equation} 
		holds for $i \in \mathcal{P}$, then the LQ differential game $\Gamma$ is an exact PDG with respect to $J\play{p}$. The index $(p)$ symbolizes the PG. 
	\end{definition}
	In \cite{2018_PotentialDifferentialGames_fonseca-morales}, a general overview of the exact PDGs is given. The main drawback of the exact PDGs is that their use is restricted to some special classes of problems, where the system or input matrices must are special. The examples discussed in the literature have always some kind of special structure e.g. weak or no relation between the system states: Each input signal can influence only one system state or in other examples, the cost functions of the players contain of sparse matrices, such that (\ref{eq:def_e_pot}) is easily fulfilled, see \cite{2018_PotentialDifferentialGames_fonseca-morales, 2016_PotentialGameTheory_la}. 
	
	\subsection{Design Problem and the Solution Steps}
	The main challenge is how to omit the manual tuning of the LISC, which is time consuming. Therefore, a systematic identification of the parameters of the matrices $\bm{\Xi}\play{\mathrm{a}}$ and $\bm{\Xi}\play{\mathrm{h}}$ in (\ref{eq:linCooperation_state}) is crucial and necessary, because a manual tuning hampers the transferability to other systems.
	Furthermore, a systematic approach is necessary to find a suitable $J\play{\mathrm{a}}_{\mathrm{LI}}$ that provides the desired behaviour and omits manual tuning, cf. Section \ref{sec:sys_derivation}-B.	
	To solve these challenges, this paper presents a design process with the following steps (cf. Fig. \ref{fig:design_steps}) and presented in detail in Section \ref{sec:sys_derivation}.
	\begin{itemize}
		\item[1] Design a FISC according to \cite{2014_NecessarySufficientConditions_flad}, with the desired behaviour of the overall system, which is set with the global objective goal function $J\play{g}$, see first part of \ref{sec:sys_derivation}-B,
		\item[2] Identifying an NPDG, which models the cooperative setup designed in the first step, see \ref{sec:Near_pot_games}-C,
		\item[3] Design of the parameter matrices $\bm{\Xi}\play{\mathrm{a}}$ and $\bm{\Xi}\play{\mathrm{h}}$ of the CS, see \ref{sec:sys_derivation}-A,
		\item[4] Design of $J\play{\mathrm{a}}_{\text{LI}}$ leading to the desired behaviour of the overall system defined in the first step, see second part in \ref{sec:sys_derivation}-B.
	\end{itemize}
	
	\section{Near Potential Differential Games} \label{sec:Near_pot_games}
	This section presents the novel concept of the NPDGs and the analysis of the NPDGs' dynamics. As mentioned earlier, NPDG identification is a key part of the design process and NPDGs are transferred from the static to the dynamic case for the first time in this paper. A special subclass of the PGs is the near PG introduced in~\cite{2013_NearPotentialGamesGeometry_candogan}, where it has been shown that a near PG has similar dynamics as a PG, which means that the input and states trajectories are similar. There, only static games are taken into account. In the following, we give an extension to differential games.
	
	\subsection{Distance to LQ NPDG}
	Similar to static case in \cite{2013_NearPotentialGamesGeometry_candogan}, we introduce a distance notation for NPDG.
	\begin{definition}[Differential Distance]
		Let a LQ PDG $\Gamma\play{p}_d$ with the potential function~(\ref{eq:pot_cost_function}) be defined. Let us further define an arbitrary LQ differential game $\Gamma_{d}$ with the cost functions of the players defined as in~(\ref{eq:cost_fucntion}). The \textit{Differential Distance} (DD) between $\Gamma\play{p}_d$ and $\Gamma_{d}$ is defined as
		\begin{equation} \label{eq:diff_distance}
			\sigma\play{i}_d(t):= \left\lVert\frac{\partial H\play{p}(t)}{\partial \sv{u}\play{i}(t)} - \frac{\partial H\play{i}(t)}{\partial \sv{u}\play{i}(t)}\right\rVert_2, \; i \in \mathcal{P}.
		\end{equation} 
	\end{definition}
	Further, we assume that the individual control laws are obtained from the solution of the coupled Riccati equations~(\ref{eq:sol_i_Ric}) over an infinite time horizon, which leads to the closed-loop system dynamics
	\begin{align}\label{eq:closed_loop_linear_system}
		\dot{\sv{x}}(t) = \vek{A}^*_c\sv{x}(t), \; \; \sv{x}(t_0) = \sv{x}_0,
	\end{align}
	where ${\vek{A}^*_c = \vek{A} - \sum_{i \in \mathcal{P}} \vek{B}\play{i}{\vek{R}\play{i}}^{-1} {\vek{B}\play{i}}^T \vek{P}\play{i}}$ and with the unique solution 
	\begin{equation} \label{eq:sol_NE}
		\sv{x}^*(t) = e^{\vek{A}^*_c \cdot t}\sv{x}_0.
	\end{equation}
	Analogously, let us use the control law $\vek{K} = {\vek{R}\play{p}}^{-1} {\vek{B}}^T \vek{P}\play{p}$, gained from the optimization of the potential function (\ref{eq:pot_cost_function}) to take into consideration the closed loop system dynamics
	\begin{align}\label{eq:closed_loop_linear_system_pot}
		\dot{\sv{x}}\play{p}(t) = \vek{A}\play{p}_c\sv{x}(t), \; \; \sv{x}\play{p}(t_0) = \sv{x}\play{p}_0,
	\end{align}
	where $\vek{A}\play{p}_c$ is computed such that ${\vek{A}\play{p}_c = \vek{A} - \vek{B}{\vek{R}\play{p}}^{-1} {\vek{B}}^T \vek{P}\play{p}}$ with ${\vek{B}=[\vek{B}\play{\mathrm{a}}, \vek{B}\play{\mathrm{h}}]}$. The solution of (\ref{eq:closed_loop_linear_system_pot}) is
	\begin{equation} \label{eq:sol_pot}
		\sv{x}\play{p}(t) = e^{\vek{A}\play{p}_c \cdot t}\sv{x}\play{p}_0.
	\end{equation}
	\begin{definition}[Near Potential Differential Game]
		A differential game is an NPDG if the DD is 
		\begin{align}
			\underset{i}{\mathrm{max }}\normTwo{\sigma\play{i}_d(t)}  < \Delta, \; \; i \in \mathcal{P},
		\end{align}
		where $\Delta>0$ is some small arbitrary constant. 
	\end{definition}
	\begin{lemm}[LQ-NPDG]
		Let an LQ differential game with NE state trajectories (\ref{eq:sol_NE}) be given. Furthermore, let a PDG with NE state trajectories (\ref{eq:sol_pot}) be given. The game $\Gamma_d$ is a LQ-NPDG, if
		\begin{align}
			\underset{i}{\mathrm{max }}\left\lVert{\vek{B}\play{i}}^T\vek{P}\play{p} - {\vek{B}\play{i}}^T\vek{P}\play{i} \right\rVert_2 \cdot \sv{x}_{\mathrm{max}} < \Delta
		\end{align}
		holds, where ${\sv{x}_\mathrm{max} = \mathrm{max }\left(\left\lVert\sv{x}^*(t)\right\rVert_2, \left\lVert \sv{x}\play{p}(t)\right\rVert_2\right)}$ is the maximum magnitude of the state vectors. 
	\end{lemm}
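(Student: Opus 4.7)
The plan is to compute both partial derivatives explicitly, show that their difference collapses to a single term involving $\bigl(\vek{P}\play{p}-\vek{P}\play{i}\bigr)$ acting on the current state, and then use the submultiplicativity of the spectral norm together with the definition of $\sv{x}_\mathrm{max}$ to conclude.

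First, I would differentiate the two Hamiltonians using their quadratic forms. From (\ref{eq:potential_hamil_quad_def}),
$$\frac{\partial H\play{p}}{\partial \sv{u}\play{i}} = \vek{R}\play{p}_{ii}\sv{u}\play{i} + B\Tplay{i}\sv{\lambda}\play{p},$$
where I assume the usual block-diagonal structure of $\vek{R}\play{p}$ used throughout this PDG literature (so the cross-term $\vek{R}\play{p}_{ij}\sv{u}\play{j}$ vanishes). From (\ref{eq:ham_SOA}),
$$\frac{\partial H\play{i}}{\partial \sv{u}\play{i}} = \vek{R}\play{ii}\sv{u}\play{i} + B\Tplay{i}\sv{\lambda}\play{i}.$$
The first key step is to note that the standing assumption for an exact/near PDG is $\vek{R}\play{p}_{ii}=\vek{R}\play{ii}$; otherwise the potential representation would not agree with the individual player costs in the input blocks. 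With this, the difference inside $\sigma\play{i}_d$ reduces to
$$\sigma\play{i}_d(t) = \bigl\lVert B\Tplay{i}\bigl(\sv{\lambda}\play{p}(t)-\sv{\lambda}\play{i}(t)\bigr)\bigr\rVert_2.$$

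Next, I would invoke the infinite-horizon LQ relation between state and costate along the closed-loop trajectories. For the NE, the coupled Riccati equation (\ref{eq:sol_i_Ric}) together with the feedback law $\sv{u}\play{i}=-\vek{K}\play{i}\sv{x}^{*}$ yields $\sv{\lambda}\play{i}(t)=\vek{P}\play{i}\sv{x}^{*}(t)$, and the analogous single-player optimal control problem associated with the potential cost (\ref{eq:pot_cost_function}) gives $\sv{\lambda}\play{p}(t)=\vek{P}\play{p}\sv{x}\play{p}(t)$. Substituting these into the previous display gives
$$\sigma\play{i}_d(t) = \bigl\lVert B\Tplay{i}\vek{P}\play{p}\sv{x}\play{p}(t) - B\Tplay{i}\vek{P}\play{i}\sv{x}^{*}(t)\bigr\rVert_2.$$

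The third step is the norm bound. Evaluated on a common state $\sv{x}(t)$ (the setting in which (\ref{eq:diff_distance}) is used for NPDG identification), this becomes $\lVert B\Tplay{i}(\vek{P}\play{p}-\vek{P}\play{i})\sv{x}(t)\rVert_2$, which by submultiplicativity of the spectral norm is at most $\lVert B\Tplay{i}\vek{P}\play{p}-B\Tplay{i}\vek{P}\play{i}\rVert_2\cdot\lVert\sv{x}(t)\rVert_2$. Bounding $\lVert\sv{x}(t)\rVert_2$ by $\sv{x}_\mathrm{max}$, taking the max over $i\in\mathcal{P}$, and invoking the hypothesis yields $\max_i\lVert\sigma\play{i}_d(t)\rVert_2<\Delta$, which is exactly the NPDG condition.

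The step I expect to be most delicate is the passage from the general difference of partial derivatives to a single $B\Tplay{i}(\vek{P}\play{p}-\vek{P}\play{i})$ term. This requires either working on a common trajectory or handling the two trajectories $\sv{x}^{*}(t)$ and $\sv{x}\play{p}(t)$ together through $\sv{x}_\mathrm{max}$; it also quietly relies on the input-weighting matrices of the potential game matching those of the individual players on the block diagonal, which is the differential analogue of the consistency condition used in the static near-PG setting of \cite{2013_NearPotentialGamesGeometry_candogan}. Everything else is a direct combination of the LQ optimality conditions and basic norm inequalities.
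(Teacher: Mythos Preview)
Your overall strategy---compute both partial derivatives, reduce their difference to a term of the form $\vek{B}\Tplay{i}(\vek{P}\play{p}-\vek{P}\play{i})\sv{x}$, and then bound via $\sv{x}_\mathrm{max}$---matches the paper's endpoint, but the mechanism you use to cancel the input-weighting terms is \emph{not} the one the paper uses, and the assumption you introduce is not available.

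Concretely, you eliminate the $\vek{R}$-terms by assuming that $\vek{R}\play{p}$ is block-diagonal with $\vek{R}\play{p}_{ii}=\vek{R}\play{ii}$. The paper never assumes this; in fact the identified $\vek{R}\play{p}$ in Section~\ref{sec:verif_sim} has nonzero off-diagonal entries and does not coincide with the players' $\vek{R}\play{ii}$. So your cancellation step is unjustified in the paper's setting. A second, related gap is that even if the diagonal blocks matched, the two occurrences of $\sv{u}\play{i}$ in $\partial H\play{p}/\partial\sv{u}\play{i}$ and $\partial H\play{i}/\partial\sv{u}\play{i}$ are evaluated along \emph{different} closed loops (the PDG trajectory and the NE trajectory, respectively), so they would not cancel termwise; you patch this by switching to a ``common state'', but that is an extra move not sanctioned by the definition~(\ref{eq:diff_distance}).

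The paper avoids both issues by a perturbation device: it evaluates each Hamiltonian along a sub-optimal feedback $\sv{u}\play{i}=-(1+\varepsilon_c)\,{\vek{R}\play{i}}^{-1}\vek{B}\Tplay{i}\vek{P}\play{i}\sv{x}^*$ (and the analogous law for the PDG). Substituting this into $\partial H\play{i}/\partial\sv{u}\play{i}=\vek{R}\play{i}\sv{u}\play{i}+\vek{B}\Tplay{i}\sv{\lambda}\play{i}$ with $\sv{\lambda}\play{i}=\vek{P}\play{i}\sv{x}^*$ makes the $\vek{R}\play{i}$-dependence cancel \emph{algebraically}, leaving $-\varepsilon_c\vek{B}\Tplay{i}\vek{P}\play{i}\sv{x}^*(t)$; likewise $\partial H\play{p}/\partial\sv{u}\play{i}=-\varepsilon_c\vek{B}\Tplay{i}\vek{P}\play{p}\sv{x}\play{p}(t)$. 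The difference then involves only $\vek{B}\Tplay{i}\vek{P}\play{p}$ and $\vek{B}\Tplay{i}\vek{P}\play{i}$, and the bound via $\sv{x}_\mathrm{max}$ follows. The point of the $\varepsilon_c$-trick is precisely that no structural relation between $\vek{R}\play{p}$ and the $\vek{R}\play{ii}$ is needed; the partial derivatives are compared ``around'' the optimum rather than exactly at it (where both would vanish and the comparison would be vacuous).
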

	\begin{proof}
		For the proof, the two partial derivatives of the Hamiltonians in (\ref{eq:diff_distance}) are analysed using the definition of the LQ NPDGs.	
		The derivatives of $H\play{i}$ is expressed as
		\begin{equation} \label{eq:deriv_player_i}
			\frac{\partial H\play{i}(t)}{\partial \sv{u}\play{i}(t)} = \m{R}\play{i}\sv{u}\play{i}(t) + \m{B}\Tplay{i}\sv{\lambda}\play{i}(t),
		\end{equation}
		which holds for $i \in \mathcal{P}$, and $\sv{\lambda}\play{i}(t) = \m{P}\play{i}\sv{x}^*(t)$ can be substituted. 
		For the further analysis, a suboptimal control law is assumed and from this the optimal control law is obtained. The applied sub-optimal control laws of the original game is
		\begin{equation} \label{eq:sub-opt-control-law}
			\sv{u}\play{i}(t) = -(1 + \varepsilon_c){\m{R}\play{i}}^{-1}\m{B}\Tplay{i}\m{P}\play{i}\sv{x}^*(t), 
		\end{equation}
		where  $\varepsilon_c \neq 0$ is arbitrary constant and $i \in \mathcal{P}$. The optimal control law is obtained with $\varepsilon_c \rightarrow 0$. The control law (\ref{eq:sub-opt-control-law}) yields the behaviour of players around the optimal solution.
		Substituting (\ref{eq:sub-opt-control-law}) in (\ref{eq:deriv_player_i}) gives
		\begin{equation*}
			\! \! \! \!\frac{\partial H\play{i}(t)}{\partial \sv{u}\play{i}(t)}\, \!\resizebox{.865\hsize}{!}{$=\! -\m{R}\play{i}(1 + \varepsilon_c){\m{R}\play{i}}^{-1}\m{B}\Tplay{i}\m{P}\play{i}\sv{x}^*(t) +\m{B}\Tplay{i}\m{P}\play{i}\sv{x}^*(t),$} \! \!
		\end{equation*}
		which can be simplified to 
		\begin{equation} \label{eq:pr_2_res_2}
			\frac{\partial H\play{i}(t)}{\partial \sv{u}\play{i}(t)} = -\varepsilon_c\vek{B}\Tplay{i}\vek{P}\play{i}\sv{x}^*(t).
		\end{equation}
		Analogously, the derivatives of the Hamiltonian of the potential function is obtained
		\begin{equation} \label{eq:diff_Hp}
			\frac{\partial H\play{p}(t)}{\partial \sv{u}\play{i}(t)} = -\varepsilon_c{\vek{B}\play{i}}^T\vek{P}\play{p}\sv{x}\play{p}(t).
		\end{equation}
		The simplification of the derivatives of the Hamiltonians and a substitution in the definition of DD (\ref{eq:diff_distance}) leads to the proof of the lemma.
	\end{proof}
	If the DD $\sv{\sigma}_d$ between the original differential game and the substituting PDG is \textit{small}, similar to the static case in \cite{2013_NearPotentialGamesGeometry_candogan}, the resulting closed loop behaviours are also similar. 
	\subsection{Dynamics of the LQ NPDGs} 
	This subsection provides the connection between the upper estimation ($\varepsilon$) of the errors between the trajectories $\sv{x}^*(t)$ and $\sv{x}\play{p}(t)$ and the DD of the two games ($\Delta$).
	Here for, we introduce a notation for the difference of the closed-loop behaviours:
	\begin{definition}[Closed-Loop System Matrix Error] \label{def:SDD}
		Let an LQ differential game $\Gamma_d$ with the control law of the NE and an LQ NPDG  $\Gamma\play{p}_d$ be given. The Closed-loop System Matrix Error (CSME) between $\Gamma\play{p}$ and $\Gamma\play{p}_d$ is denoted as
		\begin{align}
			\Delta \vek{E} &:= \normTwo{\vek{A}^*_c - \vek{A}\play{p}_c}
		\end{align}
	\end{definition} 
	\begin{lemm}[Boundedness of NPDGs] \label{lem:npdg_boundedness}
		Given an LQ differential game $\Gamma_{d}$ and an LQ PDG $\Gamma\play{p}_{d}$ defined by (\ref{eq:cost_fucntion}) and (\ref{eq:pot_cost_function}), respectively, where the underlying dynamics system is given in (\ref{eq:lin_system}). It is assumed that the initial values are identical:
		\begin{align} \label{eq:proof_init_value}
			{\sv{x}\play{p}(t_0) = \sv{x}^*(t_0) = \sv{x}_0}.
		\end{align}
		If the DD between $\Gamma_{d}$ and $\Gamma\play{p}_{d}$ is bounded in the time interval $t \in [t_0,t_1],$ such that 
		\begin{equation} \label{eq:estim_delta}
			\underset{i}{\mathrm{max }}\normTwo{\sigma\play{i}_d(t)}  < \sigma_\mathrm{max} = \Delta, \; \; i \in \mathcal{P},
		\end{equation}
		then the deviation between their trajectories are also bounded, with 
		\begin{align} \label{eq:dif_traj_with_eps}
			\normTwo{\sv{x}\play{p}(t)-\sv{x}^*(t)} \leq \varepsilon(\Delta), \; \; \forall t \in [t_0,t_1]. 
		\end{align}
	\end{lemm}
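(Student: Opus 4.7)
The plan is to recast the deviation between the two closed-loop trajectories as the response of a linear system forced by the mismatch of the closed-loop system matrices, and then convert the bound on the differential distance $\Delta$ into a bound on this mismatch via the preceding LQ-NPDG lemma.

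First, I would introduce the error trajectory $\sv{e}(t) := \sv{x}\play{p}(t) - \sv{x}^*(t)$. Subtracting (\ref{eq:closed_loop_linear_system}) from (\ref{eq:closed_loop_linear_system_pot}) and invoking the shared initial condition (\ref{eq:proof_init_value}), $\sv{e}$ satisfies the inhomogeneous linear ODE
\begin{equation*}
\dot{\sv{e}}(t) = \vek{A}^*_c\sv{e}(t) + \left(\vek{A}\play{p}_c-\vek{A}^*_c\right)\sv{x}\play{p}(t), \qquad \sv{e}(t_0)=\sv{0}.
\end{equation*}
Variation of constants, the submultiplicativity of the 2-norm, and Definition \ref{def:SDD} then yield
\begin{equation*}
\normTwo{\sv{e}(t)} \leq \Delta\vek{E}\int_{t_0}^{t}\normTwo{e^{\vek{A}^*_c(t-\tau)}}\cdot \normTwo{\sv{x}\play{p}(\tau)}\,\mathrm{d}\tau.
\end{equation*}
Because $[t_0,t_1]$ is finite, both factors in the integrand are uniformly bounded by constants depending only on $\vek{A}^*_c$, the horizon length, and $\sv{x}_0$, so there exists an absolute constant $M$ with $\normTwo{\sv{e}(t)} \leq M\,\Delta\vek{E}$ for every $t\in[t_0,t_1]$.

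Second, I would relate the CSME $\Delta\vek{E}$ to the DD bound $\Delta$. Writing
\begin{equation*}
\vek{A}\play{p}_c-\vek{A}^*_c = \sum_{i\in\mathcal{P}} \vek{B}\play{i}{\vek{R}\play{i}}^{-1}{\vek{B}\play{i}}^T\vek{P}\play{i} - \vek{B}{\vek{R}\play{p}}^{-1}\vek{B}^T\vek{P}\play{p},
\end{equation*}
and exploiting that $\vek{R}\play{p}$ is block-diagonal with blocks $\vek{R}\play{i}$ (as is implicit in the derivation of (\ref{eq:diff_Hp})), the expansion $\vek{B}=[\vek{B}\play{\mathrm{a}},\vek{B}\play{\mathrm{h}}]$ turns the potential-side term into $\sum_{i\in\mathcal{P}}\vek{B}\play{i}{\vek{R}\play{i}}^{-1}{\vek{B}\play{i}}^T\vek{P}\play{p}$. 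The triangle inequality then reduces $\Delta\vek{E}$ to a sum of terms of the form $\normTwo{\vek{B}\play{i}{\vek{R}\play{i}}^{-1}{\vek{B}\play{i}}^T(\vek{P}\play{p}-\vek{P}\play{i})}$, each of which is controlled by the preceding LQ-NPDG lemma up to a factor $1/\sv{x}_\mathrm{max}$. This delivers $\Delta\vek{E}\leq C\,\Delta/\sv{x}_\mathrm{max}$ with $C$ depending only on the $\vek{R}\play{i}$ and $\vek{B}\play{i}$. Combining with $\normTwo{\sv{e}(t)}\leq M\,\Delta\vek{E}$ gives $\normTwo{\sv{e}(t)}\leq\varepsilon(\Delta):=(MC/\sv{x}_\mathrm{max})\,\Delta$, proving the claim.

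The main obstacle is the second step: the LQ-NPDG lemma only controls the quantities $\normTwo{{\vek{B}\play{i}}^T(\vek{P}\play{p}-\vek{P}\play{i})}\cdot\sv{x}_\mathrm{max}$, whereas the CSME involves these differences additionally pre-multiplied by $\vek{B}\play{i}{\vek{R}\play{i}}^{-1}$ and summed over players, while the potential closed-loop matrix uses the single weighting ${\vek{R}\play{p}}^{-1}$. The block-diagonal structure of $\vek{R}\play{p}$ is what lets the potential term be decomposed player by player, so that a term-by-term comparison against the Nash-side sum is possible; without this structural observation, the reduction from the DD bound to a bound on $\Delta\vek{E}$ would not go through, and the rest of the argument (variation of constants plus Grönwall-type majoration) would have nothing to feed on.
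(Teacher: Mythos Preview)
Your first step---setting up the error ODE $\dot{\sv{e}}=\vek{A}^*_c\sv{e}+(\vek{A}\play{p}_c-\vek{A}^*_c)\sv{x}\play{p}$ and bounding $\normTwo{\sv{e}(t)}$ by variation of constants---is a legitimate alternative to what the paper does. The paper instead writes the trajectory difference directly via the explicit solutions (\ref{eq:sol_NE}) and (\ref{eq:sol_pot}) and factors it as $\normTwo{e^{\vek{A}\play{p}_c t}\sv{x}_0\,(e^{\Delta\vek{E}\,t}-1)}$, arriving at the closed-form bound (\ref{eq:end_of_proof}). Both routes reduce the trajectory deviation to a bound on the CSME $\Delta\vek{E}$; your Gr\"onwall-type argument is arguably cleaner, while the paper's version yields an explicit $\varepsilon(\Delta)$.

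The genuine gap is in your second step. You assume $\vek{R}\play{p}$ is block-diagonal with blocks $\vek{R}\play{i}$ and cite the derivation of (\ref{eq:diff_Hp}) as evidence. That derivation does \emph{not} require block-diagonal structure: with the suboptimal law $\sv{u}=-(1+\varepsilon_c){\vek{R}\play{p}}^{-1}\vek{B}^T\vek{P}\play{p}\sv{x}\play{p}$ one obtains $[\vek{R}\play{p}\sv{u}]_i=-(1+\varepsilon_c){\vek{B}\play{i}}^T\vek{P}\play{p}\sv{x}\play{p}$ for any positive-definite $\vek{R}\play{p}$, and indeed the $\vek{R}\play{p}$ identified in Section~\ref{sec:verif_sim} is full. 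Without that structure your player-by-player decomposition of $\vek{B}{\vek{R}\play{p}}^{-1}\vek{B}^T\vek{P}\play{p}$ does not hold, so your route from the LQ-NPDG lemma to a bound on $\Delta\vek{E}$ breaks. The paper handles the $\vek{R}$-mismatch differently: it invokes the scaling invariance $J\play{i}\mapsto\kappa^i J\play{i}$, $J\play{p}\mapsto\kappa^p J\play{p}$ (which leaves all feedback gains unchanged) to normalize so that $\normTwo{{\vek{R}\play{p}}^{-1}}\normTwo{\vek{R}\play{i}}\geq 1$, after which the $\vek{R}$-factors can be dropped and one gets $\Delta\vek{E}\leq 2\normTwo{\vek{B}}\,\Delta$ directly. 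That scaling device---not a block-diagonal hypothesis---is the missing ingredient in your argument.
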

	\begin{proof} 
		From the solution of the differential equations (\ref{eq:closed_loop_linear_system}) and (\ref{eq:closed_loop_linear_system_pot}), 
		$
		\normTwo{\sv{x}\play{p}(t)-\sv{x}(t)}= \normTwo{e^{\vek{A}^*_ct}\sv{x}_0 - e^{\vek{A}^*_ct}\sv{x}_0}.
		$ is obtained.
		As (\ref{eq:proof_init_value}) holds, 
		$$
		\normTwo{\sv{x}\play{p}(t)-\sv{x}^*(t)}= \normTwo{e^{\vek{A}\play{p}_c \cdot t}\sv{x}_0 \left(e^{\Delta \vek{E} \cdot t} - 1\right)}
		$$ 
		is obtained. In the following an upper estimation of $\Delta \vek{E}$ is sought. Substituting (\ref{eq:sol_NE}) and (\ref{eq:closed_loop_linear_system_pot}) in the Definition \ref{def:SDD}, 
		\begin{align*}
			\Delta \vek{E} = \vek{B}\left({\vek{R}\play{p}}^{-1} {\vek{B}}^T \vek{P}\play{p} - \sum_{i \in \mathcal{P}} {\vek{R}\play{i}}^{-1} {\vek{B}\play{i}}^T \vek{P}\play{i}\right) 
		\end{align*} is obtained. There is a manifold of quadratic cost functions of the players~(\ref{eq:cost_fucntion}) and the potential function (\ref{eq:pot_cost_function}) that result in an identical feedback gain matrix. They can be additionally modified such that ${{\tilde{J}\play{i}} = \kappa^{i} \cdot  J\play{i}}$ and ${\tilde{J}\play{p}} = \kappa^{p} \cdot J\play{p}$, where ${\kappa^{p}, \kappa^{i}>0 \in \, {R}^n}$ without affecting the NE. This property is used to choose $\normTwo{{\vek{R}\play{p}}^{-1}}$ such that
		$ \normTwo{{\vek{R}\play{p}}^{-1}}\normTwo{{\vek{R}\play{i}}}\geq 1$ holds, which leads to the following upper estimation:
		\begin{align*}
			& \displaystyle \Delta \vek{E}  	\leq  \vek{B}\left( {\vek{B}}^T \vek{P}\play{p} - \sum_{i \in \mathcal{P}}  {\vek{B}\play{i}}^T \vek{P}\play{i} \right)  \\
			\hspace{3mm}& \leq \normTwo{\vek{B}}\normTwo{ {\vek{B}}^T \vek{P}\play{p} - \sum_{i \in \mathcal{P}}  {\vek{B}\play{i}}^T \vek{P}\play{i}}  \\
			\hspace{3mm}& \leq  \normTwo{\vek{B}} \cdot 2 \cdot  \underset{i}{\mathrm{max }} \normTwo{{\vek{B}}^{(i)} \vek{P}\play{p} -  {\vek{B}\play{i}}^T \vek{P}\play{i}} \leq  2\normTwo{\vek{B}}  \Delta.
		\end{align*} 
		After the substitution of $\Delta \vek{E}$, the proof of the theorem is obtained
		\begin{align} \label{eq:end_of_proof}
			\varepsilon(\Delta) =  \normTwo{e^{\vek{A}\play{p}_c \cdot t}\sv{x}_0 \left(e^{\left\lVert B\right\rVert_2 \cdot 2 \cdot \Delta \cdot t} - 1\right)}, \; \forall t \in [t_0,t_1].
		\end{align}
	\end{proof}
	\begin{remark}
		Lemma \ref{lem:npdg_boundedness} provides the link between the upper estimation of the trajectory errors $\varepsilon$ and the DD of the two games, cf. (\ref{eq:estim_delta}), which therefore differs from Gr\"onwall–Bellman inequality \cite{2015_NonlinearControl_khalil} 
	\end{remark}
	\begin{remark}
		Lemma \ref{lem:npdg_boundedness} states that the distance $\Delta$ holds ${\forall t \in [t_0,t_1]}$. In the following, we use the notation $\Delta_{t_i}$ for distances with respect to the timespan ${[t_i,t_{i+1}]}$. Assuming that
		$ \Delta_{t_i} \geq \Delta_{t_{i+1}} \forall i $, Lemma \ref{lem:npdg_boundedness} can be extended for $t \rightarrow \infty$. In case of asymptotic stable system dynamics of solutions $\sv{x}\play{p}(t)$ and $\sv{x}^*(t)$ is this assumption is feasible.
	\end{remark}
	\vspace*{-1mm}	
	\subsection{Method for Finding a LQ NPDG} \label{sec:optim_find_potgame}
	For practical applications of the presented NPDG, an identification method is provided to find an NPDG for a given differential game. First, the feedback control law of the players $ {\hat{\sv{K}} = \begin{bmatrix} \hat{\vek{K}}\play{\mathrm{a}},\hat{\vek{K}}\play{\mathrm{h}} \end{bmatrix}}$ are estimated from the measurements of $\sv{x}^*$ and $\sv{u}\play{\mathrm{a}}$ and $\sv{u}\play{\mathrm{h}}$.
	Then, an inverse optimization problem is formulated as a linear matrix inequality (LMI). We adapt method of the one-player inverse optimization problem from~\cite{2015_SolutionsInverseLQR_priess} for the identification of NPDGs. The minimization of the condition number $\beta$ is subject to LMI constraints, which provides a unique solution of the identification:
	\begin{subequations} \label{eq:optim_find_pot_games}
		\begin{align} 
			&\; \; \; \hat{\vek{Q}}\play{p}, \hat{\vek{R}}\play{p},\hat{\vek{P}}\play{p}, \hat{\beta} =  \underset{\vek{Q}\play{p},\vek{R}\play{p}, \vek{P}\play{p}, \beta}{\text{arg min }}  \beta ^2  \\ 
			\text{s.t. }& \vek{A}^T \vek{P}\play{p} +  \vek{P}\play{p} \vek{A} + \vek{Q}\play{p} - \vek{P}\play{p}\vek{B}\hat{\sv{K}} = \sv{0}, \\
			& \vek{B}^T\vek{P}\play{p} - \vek{R}\play{p}\hat{\sv{K}} = \sv{0},\\
			&\vek{I} \leq \begin{bmatrix}
				\vek{Q}\play{p} & \vek{0} \\
				\vek{0} & \vek{R}\play{p}
			\end{bmatrix} \leq \beta \vek{I}, \\
			&\underset{i}{\mathrm{max }}\left\lVert{\vek{B}\play{i}}^T\vek{P}\play{p} - {\vek{B}\play{i}}^T\vek{P}\play{i} \right\rVert_2 \cdot \sv{x}_{\mathrm{max}} < \Delta, \\
			&\normTwo{{\vek{R}\play{p}}^{-1}}\normTwo{{\vek{R}\play{i}}}\geq 1 \; \forall i \in \mathcal{P}.
		\end{align}
	\end{subequations}
	The matrix $\vek{P}\play{i}$ is the solution of (\ref{eq:sol_i_Ric}) computed from the original game. For the computation of (\ref{eq:optim_find_pot_games}), it is assumed that the system matrices {\Large(}$\vek{A}$ and $\vek{B}\play{i}${\Large)} and the cost function of the automation and the human ($\vek{Q}\play{i}, \vek{R}\play{i}, \; i \in \mathcal{P}$) are given. The value of $\Delta$ is iteratively chosen based on the maximal DD between the two games, cf. (\ref{eq:estim_delta}). 
	\vspace*{-1mm}
	\section{Systematic Design of LISC by means of NPDG} \label{sec:sys_derivation}
	In this section, an approach for the systematic design of the CS and the LISC design steps are presented.
	\vspace*{-1mm} 
	\subsection{Design of the CS}
	\vspace*{-1mm}
	The idea is the following: The cooperation is modelled as an NPDG, from which the matrices $\bm{\Xi}_{m}\play{\mathrm{a}}$ and $\bm{\Xi}_{m}\play{\mathrm{h}}$ of the CS (\ref{eq:linCooperation_state}) can be computed.
	For the sake of readability, the index ${(p)}$ symbolizing the NPDG and the time dependencies are omitted in the following. Let the Hamiltonian function of the NPDG be given as in (\ref{eq:potential_hamil_quad_def}) with linear system dynamics (\ref{eq:lin_system}). To find the NE, an optimal control problem for the potential function (\ref{eq:pot_cost_function}) with the underlying linear system dynamics (\ref{eq:lin_system}) has to be solved, which is more simple then solving the coupled Ricatti Equation (\ref{eq:sol_i_Ric}). The optimality conditions (see e.g.~\cite{2015_OptimierungStatischeDynamische_papageorgiu}) of the PG are
	\begin{subequations} \label{eq:Optimality_eq}
		\begin{align}
			\frac{\partial H }{\partial \sv{u} } &= \mathbf{R} \sv{u}  + \mathbf{B}  \sv{\lambda}  = \sv{0},\\
			\dot{\sv{x}} &= \frac{\partial H}{\partial \sv{\lambda}} = \mathbf{A} \sv{x} + \mathbf{B} \sv{u},\\
			\dot{\sv{\lambda}} &= -\frac{\partial H}{\partial \sv{x}} = -\mathbf{Q} \sv{x} + \mathbf{A}^T \sv{\lambda}.
		\end{align}
	\end{subequations}
	With the optimality conditions (\ref{eq:Optimality_eq}) the optimum of (\ref{eq:pot_cost_function}) is computed (see e.g~\cite{2015_OptimierungStatischeDynamische_papageorgiu}), in which it is assumed that the co-state $\lambda$ is a linear function of $\sv{x}$
	\begin{equation} \label{eq:way_B_1}
		\sv{\lambda} = \mathbf{P}\sv{x}
	\end{equation}
	For time invariant control law ($\dot{\mathbf{P}} = 0$) the time-derivative of the co-state is
	\begin{equation} \label{eq:way_B_2}
		\dot{\sv{\lambda}} = \mathbf{P}\dot{\sv{x}}.
	\end{equation}
	Substituting (\ref{eq:way_B_1}) and (\ref{eq:way_B_2}) in (\ref{eq:Optimality_eq}c) leads to ${\mathbf{P}\dot{\sv{x}} = -\mathbf{Q} \sv{x} + \mathbf{A}^T \mathbf{P}\sv{x},}$ for which the system dynamics (\ref{eq:Optimality_eq}b) can be applied $\mathbf{P}\left(\mathbf{A} \sv{x} + \mathbf{B} \sv{u }\right) = -\mathbf{Q} \sv{x} + \mathbf{A}^T \mathbf{P}\sv{x},$ which can be rearranged to express $\sv{x}$ as the function of input $\sv{u}$
	\begin{equation} \label{eq:coop_state_B}
		\sv{x} = -\left[\mathbf{P}\mathbf{A} + \mathbf{Q} - \mathbf{A}^T\mathbf{P}\right]^\dag \mathbf{P}\mathbf{B}\sv{u},
	\end{equation}
	where the index $\dag$ is Moore–Penrose inverse of a matrix, which is computed as $\vek{G}^\dag = ({\vek{G}^T\vek{G}})^{-1}\vek{G}^T$. Using (\ref{eq:coop_state_B}) the weights of the linear CS in (\ref{eq:linCooperation_state}) can be systematically derived aiming a better understanding of the CS. A splitting of the expression $-\left[\mathbf{P}\mathbf{A} + \mathbf{Q} - \mathbf{A}^T\mathbf{P}\right]^\dag \mathbf{P}\mathbf{B}$ into measurable and non-measurable states leads to {${\sv{x}_{nm} = \bm{\Xi}\play{\mathrm{a}} \sv{u}\play{\mathrm{a}} +\bm{\Xi}\play{\mathrm{h}} \sv{u}\play{\mathrm{h}}}$}, which is in the form given in~(\ref{eq:linCooperation_state}), if ${\sv{x}_\kappa := \sv{x}_{nm}}$ is applied. The matrices $\bm{\Xi}\play{\mathrm{a}}$ and $\bm{\Xi}\play{\mathrm{h}}$ are computed from the corresponding elements of (\ref{eq:coop_state_B}).	
	\subsection{Design of a LISC} \label{subsec:LISC}
	The systematic design of the LISC includes the following steps:
	
	\subsubsection*{1} Design of a FISC as defined using the following coupled optimization problem presented in \cite{2014_NecessarySufficientConditions_flad}:
	\begin{subequations} \label{eq:optim_flad}
		\begin{gather}
			\sv{\theta}\play{\mathrm{a}} = \underset{\sv{\theta}}{\text{arg min }} J\play{g}\\ 
			\begin{align} 
				\text{s.t. }&\sv{u}\play{i} = \underset{\sv{u}\play{i}}{\text{arg min }} J\play{i}, \  \  \ i \in \mathcal{P}, \\
				& \dot{\sv{x}}(t) = \vek{A} \sv{x}(t) + \sum_{i \in \mathcal{P}} \vek{B}\play{i}\sv{u}\play{i}(t), \ \ t \in [0,\tau]
			\end{align}
		\end{gather}
	\end{subequations}
	where \( \sv{\theta} = \left[ \vek{Q}\play{\mathrm{a}}_\text{FI}, \; \vek{R}\play{\mathrm{a}}_\text{FI}\right]\) denotes the parametrization of the FISC. $J\play{g}$ is the global goal function given by the system requirements.
	The optimization (\ref{eq:optim_flad}) yields the parameters of FISC. The constraints (\ref{eq:optim_flad}b) and (\ref{eq:optim_flad}c) ensure that the resulting FISC lead to a NE the game of the of the human-automation interaction. The optimization (\ref{eq:optim_flad}) is carried out with a sequential quadratic programming solver, see e.g.~\cite{2000_NumericalOptimization_nocedal}. 
	\subsubsection*{2} From the FISC an NPDG is identified with the use of the algorithm presented in Section \ref{sec:optim_find_potgame}.
	\subsubsection*{3} From the NPDG, the CS is derived according to Section~\ref{sec:sys_derivation}.
	\subsubsection*{4} Design of the LISC with a LQR-design, which has the cost function (\ref{eq:extended_cost_function}). The goal is that the LISC reaches a result similar to the FISC. Finding this desired behaviour happens with the optimization of a quadratic cost function (\ref{eq:extended_cost_function}) for the extended system (\ref{eq:extended_system_eqaution}), where the $\vek{Q}\play{\mathrm{a}}_\text{LI}$ and $\vek{R}\play{\mathrm{a}}_\text{LI}$ are computed with the nested optimization
	\begin{subequations} \label{eq:optim_varga_k}
		\begin{align}
			\vek{Q}\play{\mathrm{a}}_\text{LI}, \vek{R}\play{\mathrm{a}}_\text{LI} \; = & \underset{   \hat{\vek{Q}}\play{\mathrm{a}}_\text{LI},\hat{\vek{R}}\play{\mathrm{a}}_\text{LI} }{\text{arg min }} \, \normTwo{\sv{u}\play{\mathrm{a}}_\text{FI}(t) - \sv{u}\play{\mathrm{a}}_\text{LI}(t)} \\
			\text{s.t. } &\; (\ref{eq:extended_system_eqaution}), \ (\ref{eq:extended_cost_function}) \text{ and } (\ref{eq:optim_flad}).
		\end{align}
	\end{subequations}
	The constraints (\ref{eq:optim_varga_k}b) ensure that 
	\begin{itemize}
		\item the global goal $J\play{g}$ is fulfilled,
		\item the automation cost function $J\play{a}_{LI}$ is optimal and
		\item the players of the game, obtained from (\ref{eq:optim_flad}), optimize their original cost functions.
	\end{itemize}
	
	\section{Verification of the Design} \label{sec:verif_sim}
	This section presents the overall design procedure on a real world application of the large vehicle-manipulators (VM), for which a heuristic controller was presented earlier in \cite{2019_ControlLargeVehicleManipulators_varga}
	\subsection{Model of the vehicle-manipulator}
	Large VMs are used for example in roadside maintenance, where it is crucial to ensure a fast and safe work with these machines, see~\cite{2020_Produktkatalog_fiedler, 2020_ProduktubersichtArbeitsgerate_mulag}. Fig.~\ref{fig:frenet_frame} shows an example of a large VM. The VM is modelled in a Fr\'enet Frame, relative to the reference trajectories, cf. Section \ref{sec:concept_cooperation_sattes}. The VM moves with the longitudinal velocity $v$ and is controlled by steering angle $u\play{a} = \delta$ and by a rate controller for the length and the orientation of the manipulator $\sv{u}\play{h} = {[\dot{a}, \, \dot{\alpha}]^T}$. The system states are ${\sv{x} = \left[d_m, \; \Delta \alpha, \; d_v, \; \Delta \theta \right]}$, where $d_m$ and $\Delta \alpha$ are the lateral and the orientation error of the manipulator. Meanwhile, $d_v$ and $\Delta \theta$ are the lateral and the orientation error of the vehicle.
	\begin{figure}[b!]
		\centering
		
		\includegraphics[width=0.75\linewidth]{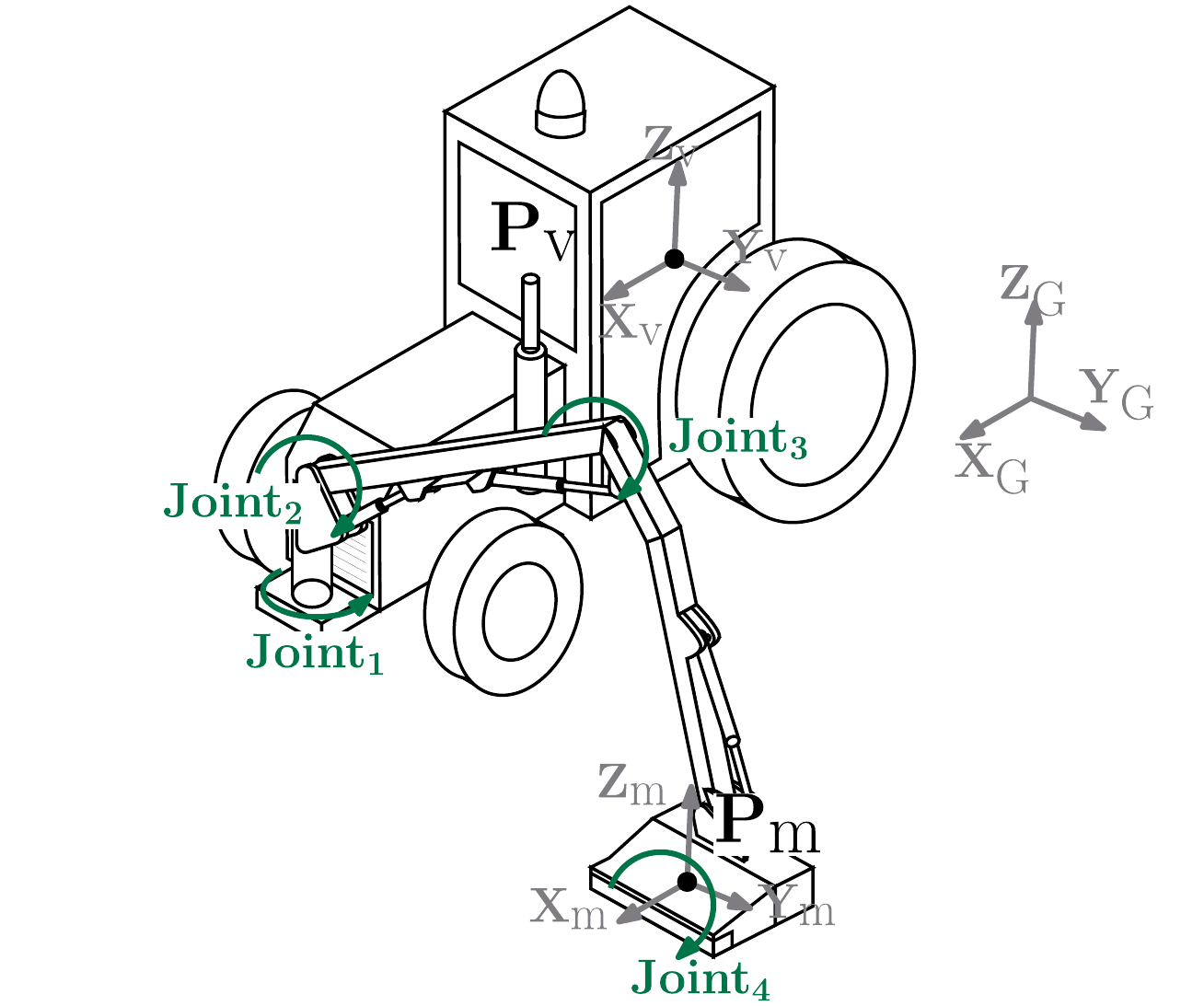}
		
		\caption{An example of a vehicle-manipulator \cite{2019_ModelPredictiveControl_varga}} 
		\label{fig:frenet_frame}
	\end{figure} 
	For a constant velocity $v$, a LTI system is obtained with the system and input matrices
	\begin{align*}
		A = \begin{bmatrix}
			0 & 0 & 0 & 0 \\
			0 & -1 & 0 & 0 \\
			0 & 0 & v & 0\\
			0 & 0 & 0 & 0 
		\end{bmatrix} & \\
		B\play{\mathrm{h}} \!  = \! \begin{bmatrix}
			\sin \alpha_e \!& 0 \!& 0 \!& 0 \\ 
			\mathrm{a}_e\cos \alpha_e \!& 1 \!& 0 \!& 0 
		\end{bmatrix}^T, \;	B\play{\mathrm{a}} & \! = \! \begin{bmatrix}
			L\cdot v \!&	0 \!&	0 \!&	v
		\end{bmatrix}^T\!,
	\end{align*}
	where $\alpha_e$ is the reference orientation angle of the manipulator, $\mathrm{a}_e$ is the length of the manipulator and $L$ is the distance between front and rear axle. The parameters are obtained from~\cite{2019_ModelPredictiveControl_varga}.	
	Controlling VMs can happen also by control methods from the literature, see e.g. \cite{2014_VehicleManipulatorSystemsModeling_from, 2018_ControlBasedLinear_andaluz, 2019_TrajectoryControlMobile_ram, 2022_MotionPlanningMobile_sandakalum}. However, in our application, it is not possible because 1) The automation and the human have to share the control tasks and 2) The manipulator errors are assumed not to be measurable.
	\subsection{Ground Truth Human Model} \label{sec:param_hum_model}
	
	\begin{figure*}[!hb] \label{fig:pot_ham}
		\setcounter{figure}{3}
		\centering
		\includegraphics[width=0.99\linewidth]{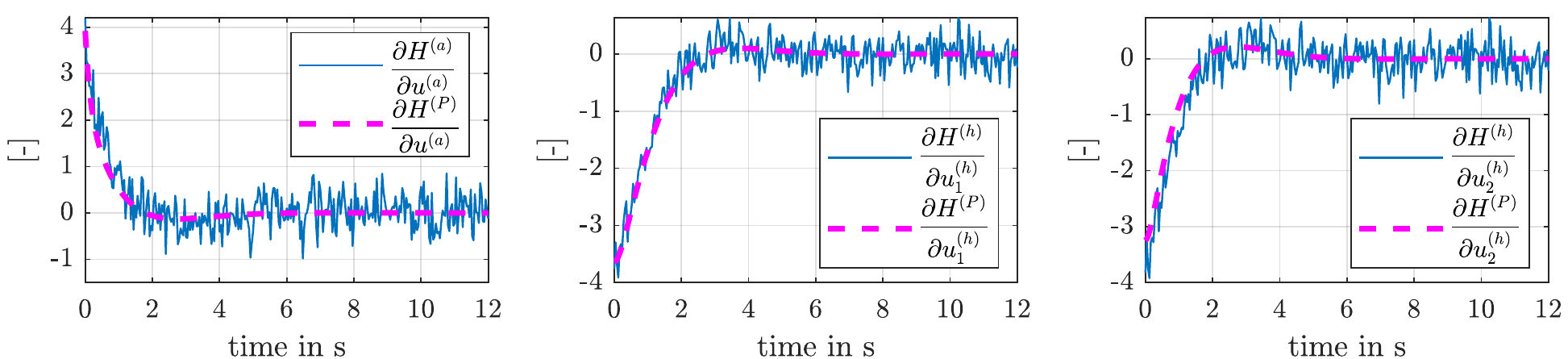}		
		\caption{The dynamics of Hamiltonian functions, the blue solid lines are the right side of (\ref{eq:def_e_pot})~with SNR$=5\,$dB and the purple dashed lines are the left side of (\ref{eq:def_e_pot}).}		
	\end{figure*}
	For the design, a ground truth of the human cost function is defined, which enables a comparison of the novel concepts with the state-of-the-art. The quadratic cost function $J\play{\mathrm{h}}$ is assumed to be identified in advance, for instance with the method of \cite{2015_SolutionsInverseLQR_priess}. The penalty matrices of the system states and the inputs are chosen to 
	${	\vek{Q}\play{\mathrm{h}} = \text{diag}\left(4.5,\;1,\;0.5,\;0.5\right)}$ and
	${	\vek{R}\play{\mathrm{h}} = \text{diag}\left(0,\;1.05,\;0.9\right)}$. 
	These penalty matrices model the human operator's main objective of minimizing the manipulator states meanwhile the deviation of the vehicle from its reference trajectories are only a little taken into account. 
	\subsection{Design of FISC} \label{subsec:FISC}
	A quadratic global cost function, 
	\begin{equation} \label{eq:glob_cost_w_param}
		J\play{g} = \frac{1}{2}\int_{0}^{\tau} \sv{x}^T \vek{Q}\play{g}\sv{x} +  \sv{u}^T\vek{R}\play{g}\sv{u} \text{ d}t,
	\end{equation}
	is used for the controller design, as suggested in~\cite{2014_NecessarySufficientConditions_flad}, where the weights are chosen to~$ \vek{R}\play{\text{g}} =   \text{diag}\left(1, \,1.45, \,1.35\right)$ and $ \vek{Q}\play{\text{g}} =  \text{diag}\left(5.5, \,0.5, \,1.25, \,0.85 \right) $. A support for the operator is ensured by the suitable choice of the penalty weights: The penalty gains of the lateral errors {\Large(}$\vek{Q}\play{\text{g}}_1$ and $\vek{Q}\play{\text{g}}_3${\Large)} for the manipulator's errors are larger than for the vehicle. However, rapid changes of the vehicle's orientation can be frustrating and counter-intuitive for the operator in contrast to the orientation error of the manipulator. For this reason, $\vek{Q}\play{\text{g}}_4$ is chosen larger than $\vek{Q}\play{\text{g}}_2$.
	The cost function of the automation is computed with (\ref{eq:optim_flad}) and the matrices obtained are ${\vek{Q}\play{\mathrm{a}}_\text{FI} = \text{diag}\left(4.21, \,    3.37, \,    1.32, \,    0.35\right)}$ and ${\vek{R}\play{\mathrm{a}}_\text{FI} = \text{diag}\left(1,\,0,\,0\right).}$ The initial values of the optimization~(\ref{eq:optim_flad}) are chosen to ${\sv{\theta}_0 = [5, \,0.1,  \,1, \,0.9, \,0.1, \, 0, \, 0]}$.
	\subsection{Identification of the NPDG with noisy measurements}
	After the design of the FISC, an NPDG which models this cooperative setup is sought. Applying the method for an NPDG (cf. (\ref{eq:optim_find_pot_games})), a quadratic potential function (\ref{eq:pot_cost_function}) can be identified. The maximum DD in (\ref{eq:optim_find_pot_games}) is chosen to $\Delta=0.05$ and the identified matrices of the NPDG are:
	\begin{figure}[t!]
		\setcounter{figure}{2}
		\includegraphics[width=0.99\linewidth,height=5.75cm]{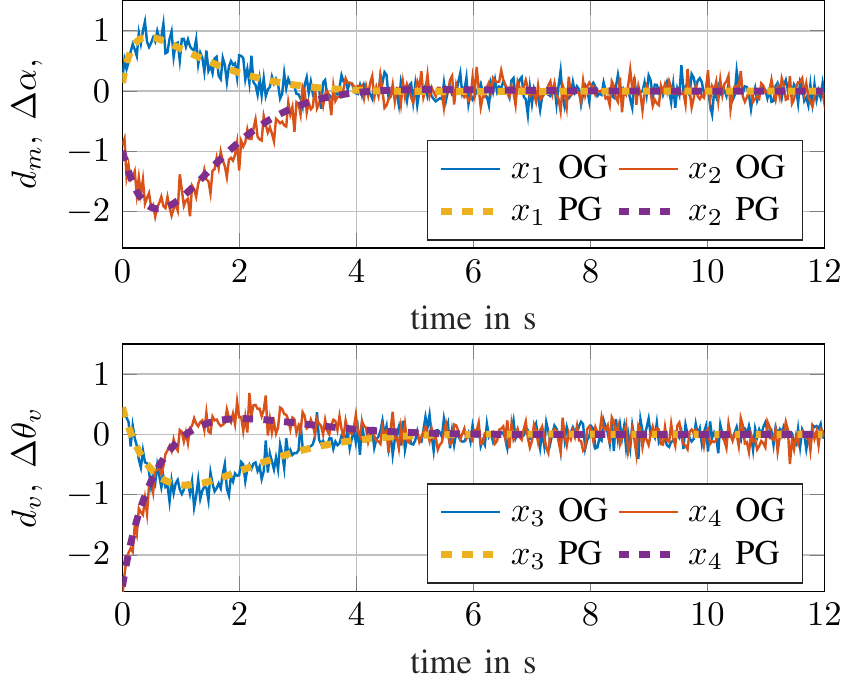}
		
		\caption{The resulting noisy trajectories of the original game (OG), with solid lines and the NPDG (PG) with dashed lines with SNR$=5\,$dB.}
		\label{fig:FISC_and_PG}
	\end{figure}
	\begin{equation*}
		\resizebox{0.999\hsize}{!}{$\mathbf{Q}\play{p} = 
			\begin{bmatrix}
				3.97&0.00&0.40&0.31\\
				0.00&1.17&0.00&0.00\\
				0.40&0.00&1.12&0.05\\
				0.31&0.00&0.05&0.38\\
			\end{bmatrix}
			\text{ and }
			\mathbf{R}\play{p} = 
			\begin{bmatrix}
				1.06&0.12&0.09\\
				0.12&1.11&0.01\\
				0.09&0.01&0.97\\
			\end{bmatrix}$}
	\end{equation*}
	The solution of the Ricatti equation is
	\begin{equation*}
		\mathbf{P}\play{p} = 
		\begin{bmatrix}
			0.988&-0.173&-0.439&-1.266\\
			-0.173&0.530&0.119&0.428\\
			-0.439&0.119&1.767&1.790\\
			-1.266&0.428&1.790&4.303\\
		\end{bmatrix}.
	\end{equation*}
	These above are the identified matrices without noisy signals. In order to analyse the robustness to measurement noise, white Gaussian noise is added to the trajectories analysing the properties of the NPDG as function of the noise level (SNR), $\tilde{\sv{x}}^*(t) = \sv{x}^*(t) + \sv{\varrho}(t) $. 
	For the measure of the deviation of the PG from the original game, we use $\mathrm{max}\,\sigma\play{i}_d(t) $, $\mathrm{max}\,(\normTwo{\sv{x}\play{p}(t)-\sv{x}^*(t)})$ and the $\Delta$ value that ensures the feasibility of (\ref{eq:optim_find_pot_games}).
	In Table \ref{table:res_traj_noise}, the results are given. It shows that 
	the smaller the SNR value, the greater the DD and consequently the value of $\Delta$ which has to be found for a feasible problem (\ref{eq:optim_find_pot_games}). Roughly speaking, this implies that the potential game is less similar to the original game with increasing noise.
	Still, the proposed algorithm can provide similar trajectories as the original game, see the trajectories with SNR$=5\,$dB on Fig. \ref{fig:FISC_and_PG}.  	
	\begin{table}[t!] 
		\small
		\caption{Results with different white Gaussian noise levels}
		\centering
		\begin{tabular}{c|ccc}
			SNR/dB & $\underset{i}{\mathrm{max}}\, \sigma\play{i}_d(t) $ & $\mathrm{max}\left(\normTwo{\sv{x}\play{p}(t)-\sv{x}(t)}\right)$ & $\Delta$ \\
			\hline
			5  & 0.1365& 0.0761  & 0.15\\
			10 & 0.1164& 0.0642 & 0.15\\
			20 & 0.0326& 0.0198 & 0.10 \\
			30 & 0.0010& 0.0123 & 0.05 \\
			\hline 
		\end{tabular}
		\label{table:res_traj_noise}
	\end{table}
\vspace*{1mm}
	\section{Validation on a Vehicle-Manipulator Test-Bench} \label{sec:validation}
	This section presents the test-bench, on which the study was carried out with sixteen test subjects. Furthermore, the analysis of the LISC designed with NPDGs is discussed.
	\begin{figure}[b!]
		\centering
		\setcounter{figure}{4}
		\includegraphics[width=0.72\linewidth]{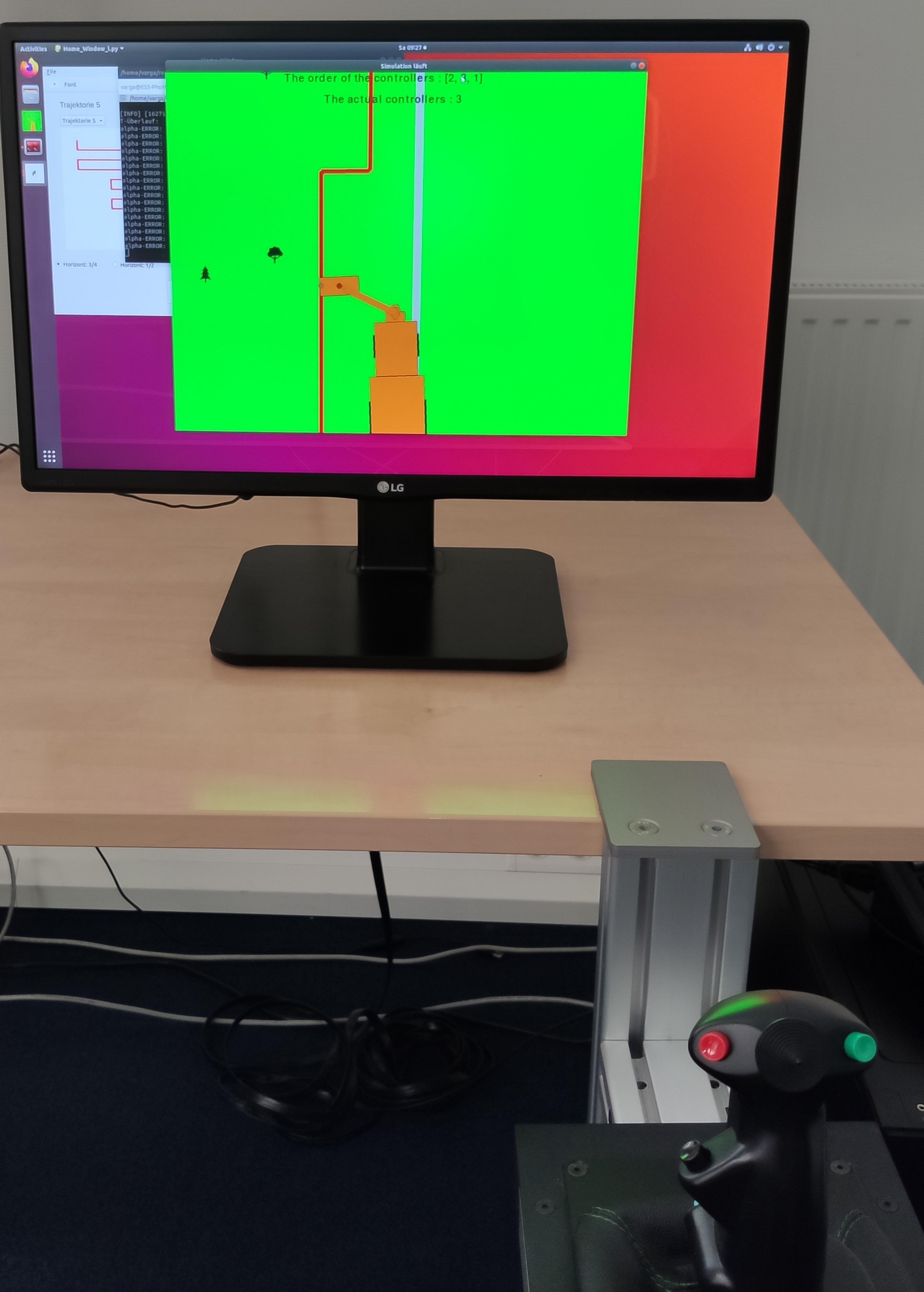}		
		\caption{Picture of the test bench with the GUI and the joystick}
		\label{fig:test_bench}
	\end{figure}
	\subsection{Test Bench Setup}
	The test bench can be seen in Fig.~\ref{fig:test_bench}. It consists of a simulation computer, a force feedback joystick from \textit{Brunner AG}, model \textit{CLS-E Brunner Jet} \cite{2022_BrunnerElektronikAG_} and a 2D GUI. The simulation was carried out an on Intel Core I7-5930K Processor at 3.5 GHz and with an Nvidia GeForce GTX 1070 GPU. The VM-model involves a three-dimensional physical model of a tractor with detailed suspension and tyre models, for details see \cite{2019_ModelPredictiveControl_varga}. The hydraulic manipulator is modelled as four rigid bodies connected by hydraulic cylinders, for which the detailed dynamics model of \cite{2017_FullReducedorderModel_ruderman} is used. The GUI is a 2D view of the VM and its environment. The components are implemented in the Robot Operating System framework. The software and hardware components are given in Fig.~\ref{fig:test_bench_software_struct}. To enable a comparison, the two reference trajectories of the manipulator (red line) and the vehicles (gray line) are given for the test subjects. In our experiment, there is no active force feedback on the joystick.
	
	\subsection{Experimental Protocol}
	Sixteen test subjects (4 female and 12 male, age $27.8\pm3.0$) took part in the experiment. All of them tested all the controllers, it is a within-subject experiment. The test subjects are not professional operators and have no prior experience with the simulator. They have the task of controlling the manipulator with the joystick to follow the red line as good as they can. In addition, they are asker to evaluate three different controller concepts:
	\begin{itemize}
		\item Non-cooperative (NC) controller, which controls the vehicle without the consideration of the manipulator. This concept is taken into account as the state-of-the-art of large VMs.
		\item FISC as given in Section \ref{subsec:FISC}. The FISC is used as the \textit{ideal solution}, which is the baseline for the required behaviour of the LISC. FISC is not applicable in general scenarios.  
		\item The proposed LISC, with the design through the NPDG as given in Section \ref{subsec:LISC}.
	\end{itemize} 
	The controllers are tested in a randomized order and the test subjects are unaware of which controller they are currently testing.
	The experimental protocol starts with a familiarization process: The test subjects have the possibility to get familiar with the control of the manipulator. This part took approximately~$250\,$sec. They are allowed to do anything to learn to control the manipulator. This is followed by the actual scenario with two typical types of trajectories (sudden step forms and smoother V-forms) and with the velocity of the vehicle set to $v=1.2\,$m/s, which is a common speed for roadside working with a large VM, see e.g.~\cite{2020_Produktkatalog_fiedler, 2020_ProduktubersichtArbeitsgerate_mulag}. The independent variable is the choice of the controller (FISC, NC, LISC). These runs took approximately $700\,$sec. Between these runs, the test subjects had the possibility to take some notes about the controllers.
	Finally, they have to evaluate the controllers by answering three questions, see \ref{sec:validation}-D.
	
	\subsection{Controller Setup}
	The controllers have the main task to keep the vehicle on its reference path. Furthermore, they should help the operator to carry out the task with the manipulator better.
	\begin{figure}[t!]
		\centering
		\setcounter{figure}{5}
		\includegraphics[width=0.9\linewidth]{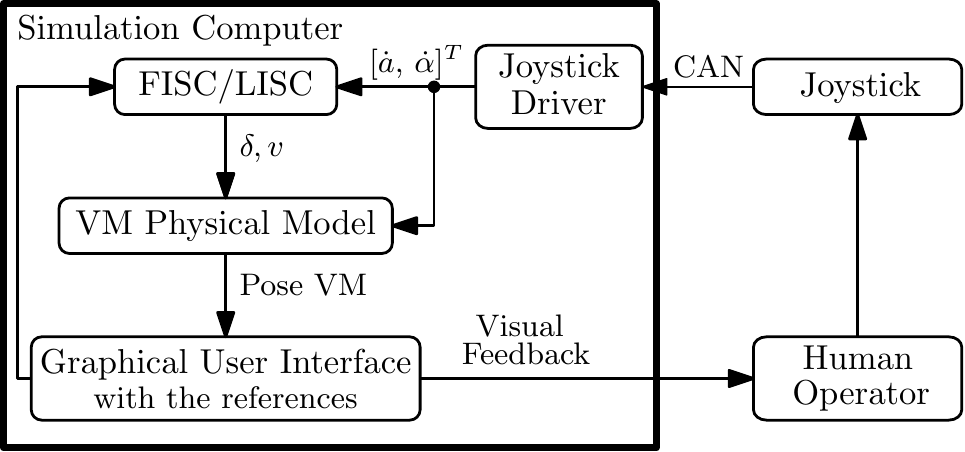}		
		\caption{Components of the test bench}
		\label{fig:test_bench_software_struct}
	\end{figure}
	\subsubsection{Non-cooperative controller (NC)} 
	The non-cooperative controller (NC) controls only the vehicle states, $d_v$ and $\Delta \theta$. This is taken into account as the simplest state-of-the-art solution of the vehicle control without a consideration of the manipulator. Similar controllers for autonomous vehicles can be found in the state-of-the-art, see e.g. \cite{2010_OptimalTrajectoryGeneration_werling, 2011_FullyAutonomousDriving_levinson, 2016_ReviewMotionPlanning_gonzalez}. The controller used in the experiment is ${u_\mathrm{NC}\play{\mathrm{a}} = -\sv{K}_\mathrm{NC} \cdot [d, \, \Delta \theta_v]}$, where feedback gains obtained from a LQR-controller design are 
	$\sv{K}_\mathrm{NC} = [1.1, 3.2].$
	\subsubsection{FISC (FI)}
	The FISC is designed with the human model from Section~\ref{sec:param_hum_model} and with the global cost function~(\ref{eq:glob_cost_w_param}). The optimization (\ref{eq:optim_flad}) is carried out and the feedback gain obtained from the optimizations is
	$\sv{K}_\text{FI} = [1.02, \, -0.06,\, 0.65,\, 1.31].$
	\subsubsection{LISC (LI)}
	The LISC is designed with the procedure given in \ref{subsec:LISC}. A linear CS is used as given in Definition 1, where the parameters are obtained from (\ref{eq:linCooperation_state}). The result is
	\begin{align*}
		\Xi_{m}^{(h)} = \begin{bmatrix}
			0.144&0.192\\
			0.055&0.454\\
		\end{bmatrix} \text{ and }\;
		\Xi_{m}^{(a)} = \begin{bmatrix}
			0.470\\
			-0.127\\
		\end{bmatrix}.
	\end{align*}
	The feedback gain for the extended system dynamics (\ref{eq:extended_system_eqaution}) is
	$$ \sv{K}_\text{LI} = [11.11, \; 25.32, \; 30.50, \; -29.01, \; -22.31],$$
	which is obtained from optimization of (\ref{eq:extended_cost_function}) with parameters computed from (\ref{eq:optim_varga_k}). 
	\subsection{Evaluation Criteria}
	The goals of the experiment are threefold: First, the usability of the design with NPDG is validated. Then, the difference between the state-of-the-art NC controller and the novel LISC concept is analysed statistically. This comparison has a practical relevance: Both NC and LISC may be implemented on a real VM with similar sensors and hardware setup. Finally, the differences between the FISC and the LISC are investigated. 
	
	For the objective evaluation, a stack consisting of M data points is collected from the measurements with $25\,Hz$. Additionally, the performance, defined as the root mean square error (RMSE) of the manipulator
	\begin{equation}
		|d_{m}| = \sqrt{\frac{1}{M}\sum_{k=1}^{M} d^2_m[k]},
	\end{equation}
	is computed for the evaluation. The number of collected data points is $M \approx 2490.$
	The subjective evaluation of the controllers is made by means of three questions:
	\begin{itemize}
		\item[Q1] How do you assess your task performance? \\
		(Insufficient 0 -- 10 very good)
		\item[Q2] How intuitive did you find the support? \\
		(Not intuitive 0 -- 10 very intuitive)
		\item[Q3] How useful was the support to better accomplish the task? \\
		(Very disturbing 0 -- 10 very helpful)
	\end{itemize}
	The experimental data is used for the analysis of the following two hypotheses:
	\begin{itemize}
		\item [H1] The use of the LISC leads to a significant task performance improvement compared to a NC.
		\item [H2] There is no significant task performance difference between the FISC and the LISC using the design with the NPDG.
	\end{itemize}
	Note that the three controllers are not compared to each other at once and that the comparisons are not classical pairwise ones: H1 is a difference testing of LISC and NC. This is done with one-tailed, paired-sample t-tests \cite{2014_DataAnalysis_brandt} with the significance level of $\alpha_\text{H1} = 0.02$. An additional Bonferroni correction is applied due to the use of the data for H2, therefore, the corrected significance level is $\tilde{\alpha}_\text{H1} = \frac{\alpha_\text{H1}}{2} = 0.01$. On the other hand, the second hypothesis is an equivalence testing in accordance to \cite{1993_UsingSignificanceTests_rogers}. This happens by two one-sided t-tests (TOST) with $90\,$\% confidence interval for the difference. For more details of TOST, we referr to \cite{2005_TtestStatisticalEquivalence_limentani} and \cite{2018_EquivalenceTestingPsychological_lakens}.
	\section{Results and Discussion}
	\subsection{Objective Results}
	Table \ref{table:dev_manip_avg} shows the mean values and the standard deviation of the average errors of the manipulator $|d_m|$\footnote{For the sake of brevity in the discussion of the results, the indexes FI and LI are used instead of FISC and LISC, respectively.}. As there are no significant difference between the two runs with the different trajectory forms, therefore, they are analysed together, cf. Subsection \ref{sec:validation}-B. The results show that the NC has the weakest performance. For the statistical test of H1, LI and NC are compared: The p-value is ${p_\text{LI-NC} = 7.1\cdot10^{-3}}$, which shows that LI is significantly better than NC. Thus $\tilde{\alpha}_\text{H1}>p_\text{LI-NC}$ holds, H1 is accepted. 
	
	For the equivalence testing of H2, FI and LI are analysed by a TOST, which provides two p-values for both sides: If one of the values is greater then the significance level (${\alpha_\text{H2} = 0.05}$), there is no statistical difference between the mean values of FI and LI. The p-values are ${p_\text{FI-LI} = [0.15, \; 0.43]}$, therefore, H2 is also accepted.
	\begin{table}[b!] 
		\small 
		\centering
		\caption{The mean value of RMSEs of the manipulator from the reference $\mu_{\left|d_{m}\right|}$ and their standard deviation $\sigma_{\left|d_m\right|}$}
		\begin{tabular}{cccc}
			&NC & FI & LI \\
			\hline
			$\mu_{\left|d_{m}\right|}$ in m& 0.502 & 0.327 &  0.349 \\
			$\sigma_{\left|d_m\right|}$ in m& 0.264 & 0.216  & 0.219 \\
			\hline
		\end{tabular} 
		\label{table:dev_manip_avg}
	\end{table}
	\vspace*{-1mm}
	\subsection{Subjective Results}
	The results of the questionnaire enhance the results of the quantitative results, given in Table~\ref{table:questions}. The resulting p-values of the one-sided t-test are 
	$
	p^{Q1}_{\text{LI-NC}} = 3.2\cdot10^{-3},$ $p^{Q2}_{\text{LI-NC}} = 1.4\cdot10^{-5}$ and $p^{Q3}_{\text{LI-NC}} = 9.5\cdot10^{-6},$ showing that LI is significantly better than NC: LI leads to better subjective assessment of the task performance, is more intuitive and is more useful compared to NC.
	
	The results of FI and LI are not significantly different. Proving this equivalence between them, for each question a TOST is applied, with a $90\,$\% confidence interval. The resulting p-value pairs are
	$p^{Q1}_{\text{LI-FI}} = [0.29, \,2.16\cdot10^{-4}],$ $p^{Q2}_{\text{LI-FI}} = [0.22,\,2.03\cdot10^{-3}]$ and $ p^{Q3}_{\text{LI-FI}} = [0.36, \, 6.4\cdot10^{-4}],$
	which show that there is no statistical difference between the proposed LI and the FI.
	
	The verbal feedback of the test subjects provided similar results: The difference between FI and LI is not noticeable for the test subjects. The third test subject expressed his personal point of view by "Controller 1 [LI] is more aggressively configured than controller 2 [FI]" or test subject 11 said: "Controller 3 [FI] helps more in small curves than controller 1 [LI], but not in large curves". On the other hand, all the test subjects are able to distinguish NC from FI and LI. 
	The analysis of the subjective assessment supports the acceptance of H1 and H2.
	\begin{table}[b!] 
		\small
		\caption{Mean values and the standard deviations (in the brackets) of the personal questionnaire results}
		\centering
		\begin{tabular}{l|ccc}
			&NC & FI & LI \\
			\hline
			Q1-Self-assessment        & 4.94 (2.21) & 7.50 (1.10) & 6.75 (1.39)\\ 
			Q2-Intuition              & 4.00 (2.10) & 7.81 (1.42) & 7.21 (1.47)\\
			Q3-Support helpfulness    & 3.50 (2.28) & 7.81 (1.47) & 7.00 (1.41)\\
			\hline 
		\end{tabular}
		\label{table:questions}
	\end{table}
	\begin{figure}[t!]
		\centering	
		\includegraphics[width=\linewidth,height=5.95cm]{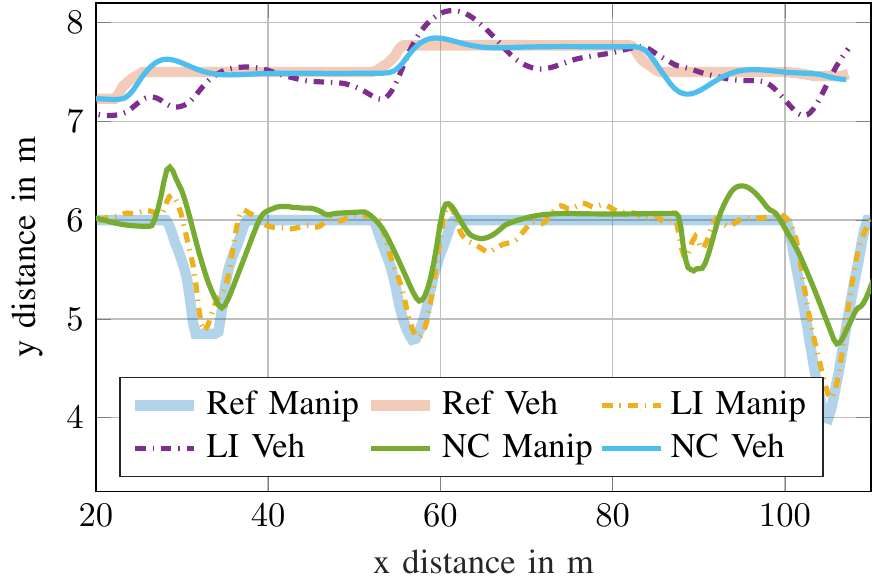}
		\caption{Comparison of the overall performance (test subject 6) to track the references (thick lines) of vehicle and manipulator	using a controller with no cooperative support (thin line) and a LI (dashed)}
	\label{fig:Meas_19_p4}
\end{figure}

\subsection{Discussion}
The resulting trajectories of the different concepts are presented in the following figures. In Fig.~\ref{fig:Meas_19_p4}, NC and LI are compared. It can be seen that with LI, a better tracking of the manipulator references is possible. Whereas, Fig. \ref{fig:Meas_125_p4_ful_lim} shows a comparison between FI and LI. It can be seen that both the vehicle and the manipulator trajectories are similar. However, there are still some differences, for which a possible reason may be the predefined human model. This indicates that an identification of the human operator can be beneficial in future case studies. The benefit of the novel LI is clear from Fig. \ref{fig:Meas_125_p4_ful_lim}: Despite of the limited information from the references the LI can provides a similar support, which does not differ strongly from a FI. The practical benefit is the saving of the necessary sensor on the manipulator. 
\section{Conclusion and Outlook} \label{sec:conclu}
This paper presented a systematic design method of the limited information shared control. The design happens with a novel class of differential games, the near potential differential games. A near potential differential game can model a cooperative human-machine-interaction, which is used for the design. 
With a practical use-case, we verified in simulations the limited information shared control design with the concept of the near potential differential games. Furthermore, the limited information shared control is validated on a test-bench. A statistical analysis with sixteen test subjects is carried out. This shows that the novel design of the limited information shared control leads to a behaviour similar to the controller needs all the system states and trajectories.
In our future work, we plan to combine the longitudinal guidance of the vehicle-manipulator with the lateral controller.
\begin{figure}[t!]
\centering
\includegraphics[width=\linewidth,height=5.95cm]{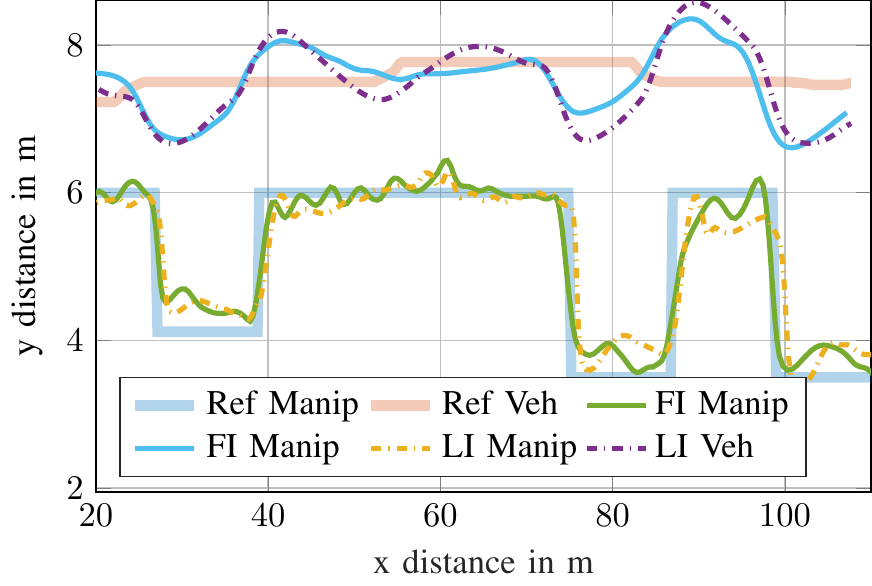}
\caption{Comparison of the overall performance (test subject 12) to track the references (thick lines) of vehicle and manipulator	using a controller with FI (thin line) and LI (dashed)}
\label{fig:Meas_125_p4_ful_lim}
\end{figure}
\bibliographystyle{IEEEtran}
\bibliography{bib_2022v6}
\begin{IEEEbiography}[{\includegraphics[width=1in,height=1.25in,clip,keepaspectratio]{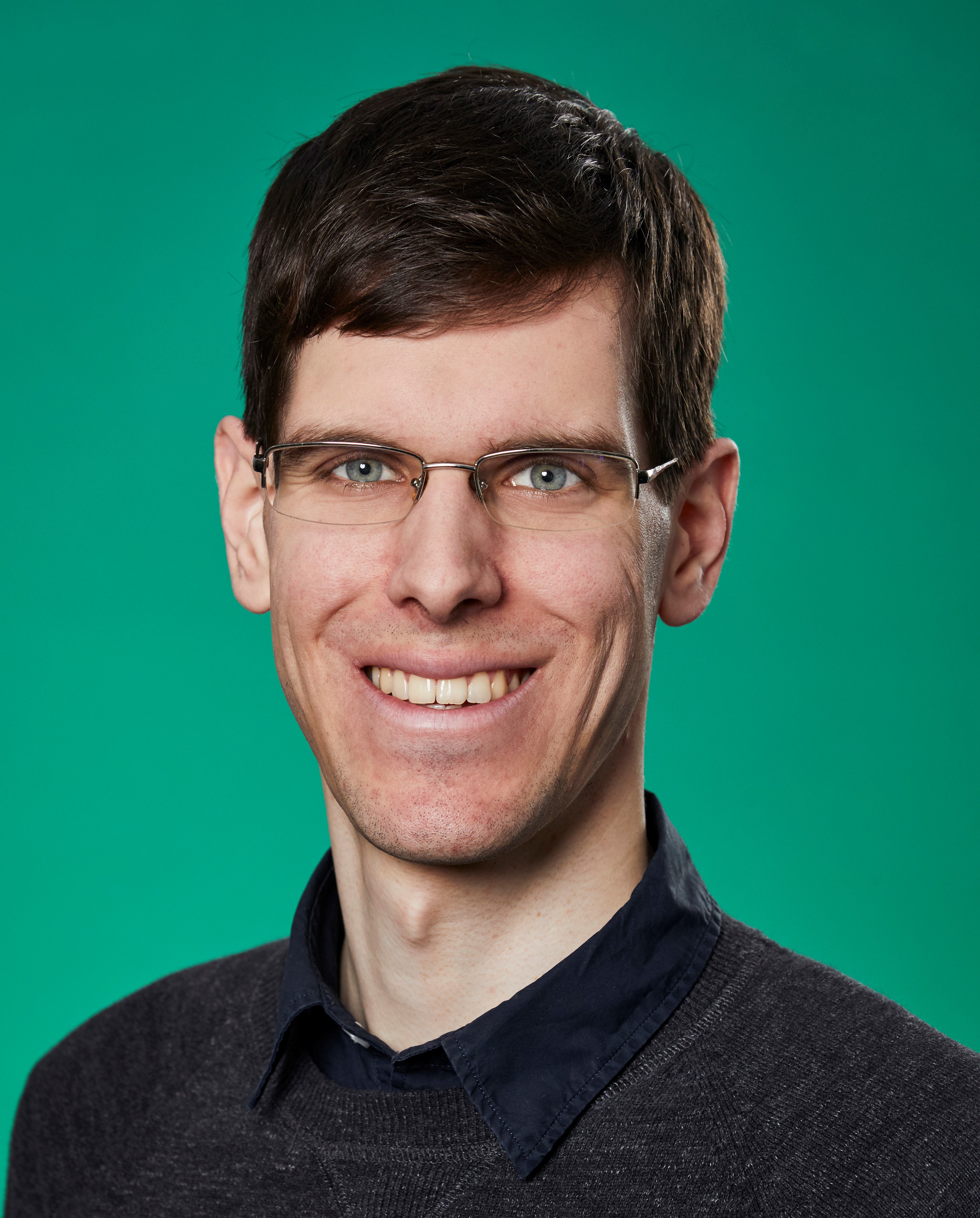}}]{Balint Varga}
received the B.Sc. in mechatronics from the Technical University Budapest, Hungary and M.Sc. degree in mechanical engineering from Karlsruhe Institute of Technology (KIT), Karlsruhe, Germany, in 2016 and 2017, respectively. 2017 -- 2020 research scientist at the FZI Research Center for Information Technology. Since 2020, research assistance at KIT. His research interests include the modelling human-machine interaction and design shared control concepts in the application of vehicle-manipulators.
\end{IEEEbiography}
\vspace*{-2mm}
\begin{IEEEbiography}[{\includegraphics[width=1in,height=1.25in,clip,keepaspectratio]{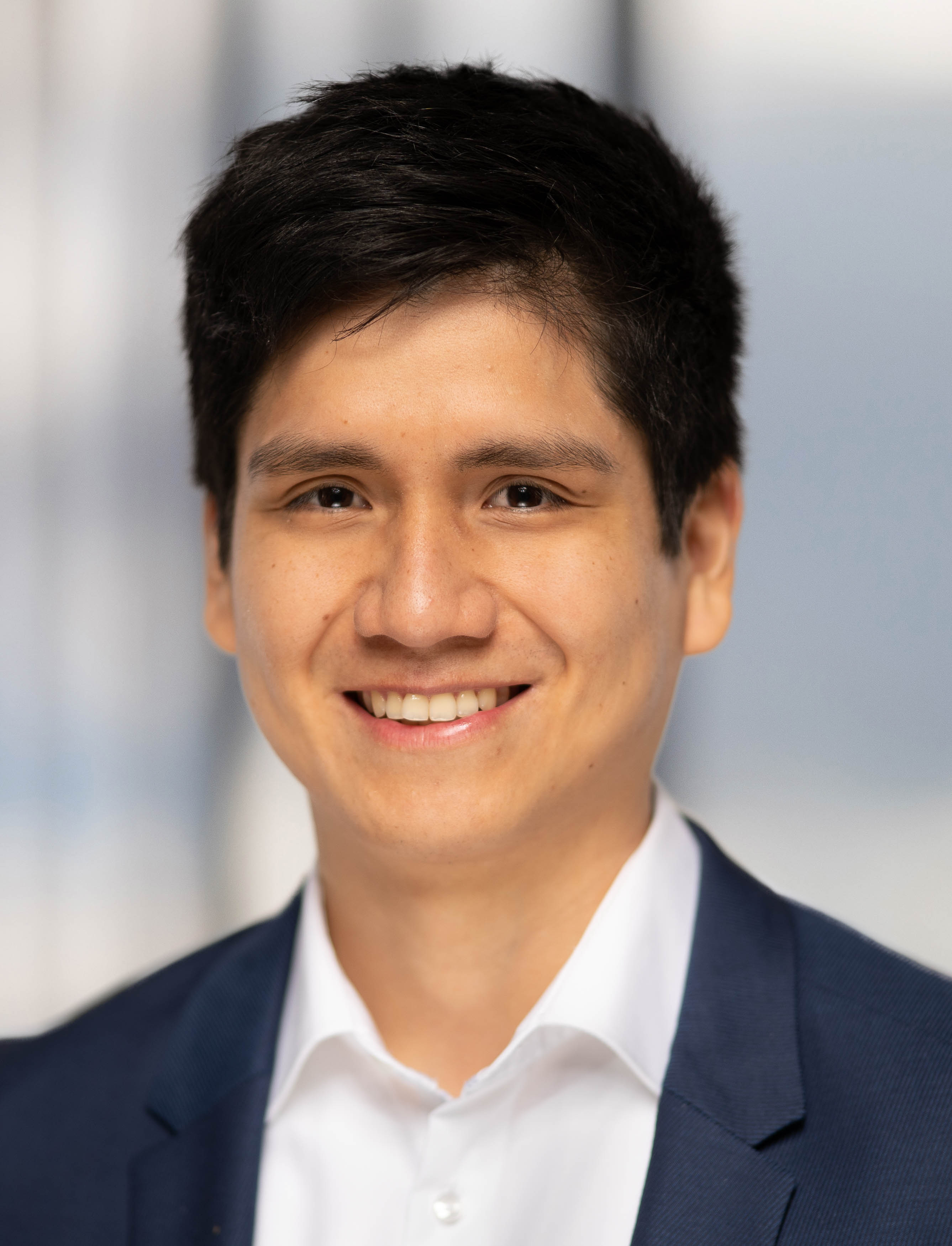}}]{Jairo Inga} received the B.Sc. (2011), M.Sc. (2014), and Dr.-Ing. (2020) in electrical engineering and information technology from the Karlsruhe Institute of Technology (KIT), Germany. He was a research assistant from 2015 to 2020 at the Institute of Control Systems at KIT and currently leads the Cooperative Systems research group at the same institute. His research interests include optimal control and dynamic game theory for modeling and identification of human behavior and the design of cooperative controllers in human-machine systems.
\end{IEEEbiography}
\vspace*{-2mm}
\begin{IEEEbiography}[{\includegraphics[width=1in,height=1.25in,clip,keepaspectratio]{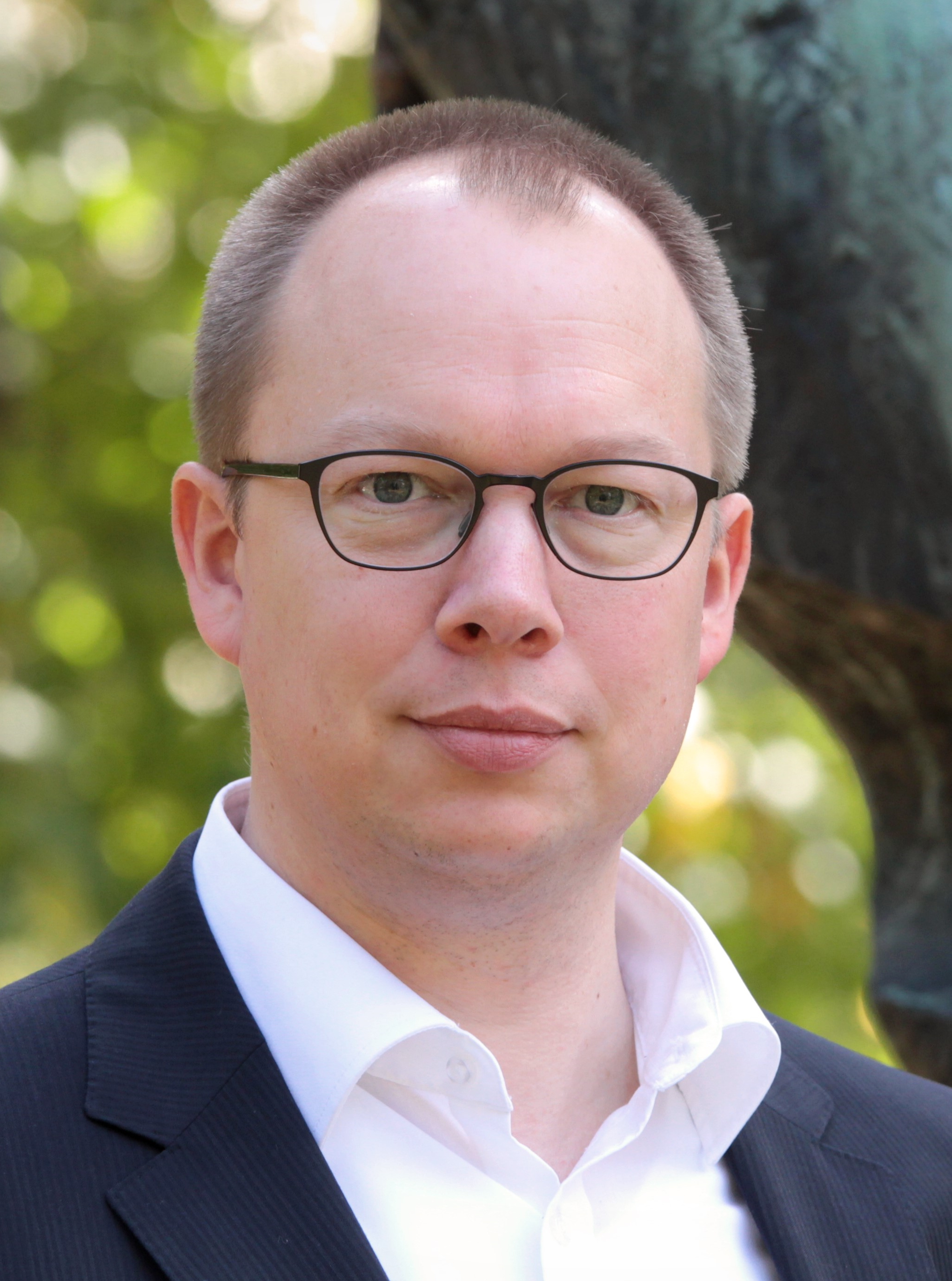}}]{Prof. Dr.-Ing. Sören Hohmann} studied electrical engineering at the Technische Universität Braunschweig, University of Karlsruhe and école nationale supérieure d’électricité et de mécanique Nancy. He received the diploma degree (1997) and PhD degree (2002) from the University of Karlsruhe. Afterwards, until 2010 he worked in the industry for BMW, Munich, where his last position was head of the predevelopment and series development of active safety systems. Today he is the head of the Institute of Control Systems at the Karlsruhe Institute of Technology, Germany as well as a director’s board member of the research center for information technology (FZI), Karlsruhe. His research interests are cooperative control, alternative energies and system guarantees by design.  
\end{IEEEbiography}
\end{document}